\documentclass[reqno]{amsart}

\usepackage{amssymb}
\usepackage{hyperref}
%%%%%%%%%%%%%%%%%%%%%%%%%%%%%%%%%%%%

\usepackage{amsfonts} 
\usepackage{amsmath}
\usepackage{color,ulem}
%\usepackage[notref,notcite]{showkeys}

%%%%%%%%%%%%%%%%%%%%%%%%%%%%%%%%%%%%%

\numberwithin{equation}{section}
\newtheorem{theorem}{Theorem}[section]
\newtheorem{proposition}[theorem]{Proposition}
\newtheorem{lemma}[theorem]{Lemma}

%%%%%%%%%%%%%%%%%%%%%%%%%%%%%%%%%%%%%%%%%%%%%%%%%%%%
\newcommand{\abs}[1]{ | #1 |}

\newcommand{\tnorm}[1]{\left|\!\left|\!\left| #1 \right|\!\right|\!\right|}

\renewcommand\H{\mathcal{H}}

\newcommand\R{\mathbb R}

\newcommand\Z{\mathbb Z}
\newcommand\K{\mathcal{K}}
\newcommand\D{\mathcal{D}}
\newcommand\Db{\mathbf{D}}
\newcommand\G{\mathcal{G}}
\newcommand\U{\mathcal{U}}
\newcommand\Ll{\mathcal{L}}
\newcommand\A{\mathbf{A}}

\newcommand\x{\mathbf{x}}

\newcommand\di{\mathrm d}

\newcommand{\Hc}{\mathcal{H}_{\mathrm{c}}}
\newcommand{\Cc}{C_{\mathrm{c}}}

\newcommand\El{\mathbf E}

\newcommand{\T}{\mathcal{T}}
\newcommand\F{\mathbf{F}}

\renewcommand\S{\mathcal{S}}
\newcommand\tr{\mathrm{tr}}

\newcommand\e{\mathrm{e}}
\newcommand{\la}{\langle}
\newcommand{\ra}{\rangle}
\renewcommand\P{\mathbb P}
\newcommand\E{\mathbb E}

\newcommand{\HH}{\mathbb{H}}
\newcommand{\LL}{\mathbb{L}}
\newcommand{\cJ}{\mathcal{J}}

\newcommand\I{\mathcal{I}}

\newcommand{\eq}[1]{\eqref{#1}}

%%%%%%%%%%%%%%%%%%%%%%%%%%%%%%%%%%%%%%%%%%%%%%%%%%%%
%%%%%%%%%%%%%%%%%%%%%%%%%%%%%%%%%%%%%%%%%%%%%%%%%%%%
%%%%%%%%%%%%%%%%%%%%%%%%%%%%%%%%%%%%%%%%%%%%%%%%%%%%

\begin{document}
 \normalem

\title[The Liouville Equation for  Magnetic Schr\"odinger
operators]{The Liouville Equation for Singular Ergodic Magnetic Schr\"odinger
operators}

\author[Y. Kang]{Yang Kang}
\address[Kang]{Michigan State University, Department of Mathematics, East Lansing, MI 48823, USA}
 \email{ykang@math.msu.edu}

\author[A. Klein]{Abel Klein}
\address[Klein]{University of California, Irvine,
Department of Mathematics,
Irvine, CA 92697-3875,  USA}
 \email{aklein@math.uci.edu}

\thanks{A.K was  supported in part by NSF Grant DMS-0457474.}

%%%%%%%%%%%%%%%%%%%%%%%%%%%%%%%%%%%%%%%%%%%%%%%%%%%%

\begin{abstract}
We  study the time evolution of a density matrix in a  quantum mechanical system described
by an ergodic magnetic  Schr\"odinger operator with singular magnetic and electric potentials, the electric field being introduced adiabatically. We construct a unitary propagator that solves weakly the corresponding time-dependent Schr\"odinger equation, and  solve a Liouville equation in an appropriate Hilbert space. 
\end{abstract}

%%%%%%%%%%%%%%%%%%%%%%%%%%%%%%%%%%%%%%%%%%%%%%%%%%%%
\maketitle

%\tableofcontents

\section{Introduction}\label{secintro}

We study non-interacting quantum particles in a disordered background described by a one-particle ergodic magnetic Schr\"odinger operator. The system is taken to be at equilibrium at time $t=-\infty$ in a state given by a one-particle density matrix, and
 the electric field is introduced adiabatically.  The time evolution of the density matrix  is then described  by a Liouville equation.   In this article we consider singular ergodic magnetic Schr\"odinger operators, the conditions on the magnetic and electric potentials  only ensure that    $\Cc^\infty(\R^d)$ is  a form core for the  magnetic Schr\"odinger operator, and prove that the   Liouville equation can be  given a precise meaning and solved in an appropriate Hilbert space. A similar result was  previously obtained  by Bouclet, Germinet, Klein and Schenker  \cite{BGKS} under stronger conditions on  the magnetic and electric potentials that yield essential self-adjointness  of the magnetic Schr\"odinger operator on  $\Cc^\infty(\R^d)$.

We consider  magnetic  Schr\"odinger operators of the form 
\begin{align}  \label{H(A,V)4} 
  H=    H(\A,V) :=  \left(-i\nabla - \A \right)^2 + V   \; \; \;
\mathrm{on} \; \; \;
\H :=\mathrm{L}^2(\mathbb{R}^d),
\end{align}
where the magnetic potential $\A$ and the electric potential $V$ satisfy:     
\begin{itemize}
\item[(i)]  $\A  \in \mathrm{L}^2_{\mathrm{loc}}(\R^d; \R^d)$, 
\item[(ii)]  $V= V_+ - V_-$, where
 $V_\pm \in \mathrm{L}^1_{\mathrm{loc}}(\R^d)$, $V_\pm \ge 0$,
 and $ V_-$  is relatively form bounded with respect to $\Delta$ with relative bound $<1$, i.e., there are $0 \le\alpha  < 1$ and $\beta \ge 0$ such that
\begin{equation}\label{form_bound}
 \la \psi, V_-\psi \ra  \le \alpha \la \psi,-\Delta \psi\ra + \beta ||\psi ||^2.
\end{equation}
\end{itemize}
$H$ is naturally defined as a semi-bounded self-adjoint operator  by a quadratic form, with $C_c^{\infty} (\R^{d})$ being a form core, and    the diamagnetic inequality  holds for $H$ (cf. \cite[Theorems~2.2 and 2.3]{Si79}; although $V_{-}=0$ in \cite{Si79}, the results extends  to $V_{-}$ relatively bounded as in \eq{form_bound}  by an approximation argument as in
 \cite[Proposition~7.7 and Theorem~7.9]{F}.). The  usual trace estimates for Schr\"odinger operators hold for $H$ (cf.  \cite[Proposition~2.1]{BGKS}).

In a disordered background the system is modeled by an  ergodic magnetic Schr\"odinger operator 
 \begin{align}  \label{H(A,V)omega} 
  H_{\omega}=    H(\A_{\omega},V_{\omega}) :=  \left(-i\nabla - \A_{\omega} \right)^2 + V_{\omega}   \; \; \;
\mathrm{on} \; \; \;
\mathrm{L}^2(\mathbb{R}^d),
\end{align}
where the  parameter $\omega$ runs in a probability space $(\Omega,\P)$, and for $\P$-a.e.\ $\omega$ we assign a  magnetic potential $A_\omega$ and an
electric potential $V_\omega$ such that  $H_{\omega}=    H(\A_{\omega},V_{\omega}) $  is  as  in \eq{H(A,V)4}.    The ergodic system satisfies a covariance relation:  there exist an ergodic group $\{\tau(a); \ a \in \Z^d\}$ of
measure preserving transformations on the probability space
$(\Omega, \P)$  and     a unitary
projective representation   $U(a)$ of $\Z^d$ on $\mathrm{L}^2(\mathbb{R}^d)$ 
such that for $\P$-a.e.\ $\omega$  
\begin{align} \label{covintro} 
U(a)  H_\omega (t) U(a)^* =  H_{\tau(a) \omega}(t) \;\; \text{and}\; \; 
U(a) \chi_b U(a)^* &=\chi_{b+a} \; \; \text{for     all  $a ,b \in \Z^d$,}
\end{align}   
where $\chi_a$ denotes  multiplication by the characteristic
function of the unit cube centered at $a$. 
It follows from ergodicity that $ (V_\omega)_-$ satisfies \eqref{form_bound} $\P$-a.e.\ with the same constants $\alpha$ and $\beta$, and there exists a constant $\gamma$ such that 
\begin{align}
   H_\omega + \gamma   \ge 1 .   \label{Homegamma}  
\end{align}

 At time $t=-\infty$, the system is  in equilibrium 
in the state given by a one-particle density matrix ${\zeta}_\omega=f(H_\omega)$,
where $f$ is a non-negative  bounded function with fast enough decay at infinity. At
zero temperature we take \ 
${\zeta}_\omega=P_\omega^{(E_F)}=\chi_{(-\infty,E_F]}(H_\omega)$, the Fermi
projection corresponding to the Fermi energy $E_F$. 
A homogeneous  electric field $\El$ is then slowly switched on from time $t= -\infty$ to time $t=0$.  We take  $\eta>0$, let    $t_- = \min \, \{t, 0\}$,
$t_+ = \max \, \{t, 0\}$, and set
\begin{equation} \label{defEintro}
\mathbf{E}(t)=
\mathrm{e}^{\eta t_-}\mathbf{E}\, .
\end{equation}
The dynamics are now generated, in the appropriate gauge (see \cite[Section~2.2]{BGKS}),  by the ergodic
time-dependent  magnetic Schr\"odinger operator
\begin{equation}\label{eq:Homegaintro}
    H_\omega(t) = ({-i\nabla} - \A_\omega - \F(t))^2  + V_\omega  =
G(t) H_\omega G(t)^\ast \, ,
\end{equation}
where
\begin{equation} \label{Ftintro}
\F(t) =  \int_{-\infty}^t \mathbf{E}(s) \di s =
\left(\textstyle{\frac{\mathrm{e}^{\eta t_-}} \eta } + t_+  \right) \El \, ,
\end{equation}
and $G(t)=\mathrm{e}^{i\F(t)\cdot x}$ is a gauge transformation on
$\mathrm{L}^2(\R^d)$.  Note that $  H_\omega(t)$  is almost surely   a magnetic  Schr\"odinger operator as  in \eq{H(A,V)4}. 
Under this time evolution, $\varrho_\omega(t)$, the
 density matrix  at time $t$,   is  the solution of
 the Liouville equation  given formally by
\begin{equation}\label{Liouvilleeq}
\left\{
\begin{array}{l}i\partial_t \varrho_\omega(t) = [H_\omega(t),\varrho_\omega(t)]
\\
\lim_{t \to  -\infty} \varrho_\omega(t)= {\zeta}_\omega
\end{array}
\right.  \, .
\end{equation}

We will  give a  precise meaning to this Liouville equation for  an ergodic  magnetic Schr\"odinger operator $H_{\omega}$ as in \eq{H(A,V)omega}, and construct its solution. We will assume
\begin{equation}  \label{key_hypothesis}
\E\left\{ \left\| x_k^{2} \,{\zeta}_\omega \chi_0\right \|_2^2\right\} < \infty \quad \text{for}\quad k=1,\cdots,d,
\end{equation}
 where  $\|S\|_2$ denotes the Hilbert-Schmidt norm of
the operator $S$. 
(If ${\zeta}_\omega = P^{(E_F)}_\omega$ where $E_F$ falls inside a
gap of the spectrum of $H_\omega$, or ${\zeta}_\omega=f(H_\omega) $ with $f$
smooth and appropriately decaying at high energies, then \eqref{key_hypothesis} is
readily fulfilled by general arguments, cf.  \cite{GKdecay}. It also holds for  ${\zeta}_\omega = P^{(E_F)}_\omega$  if 
 the Fermi energy $E_F$ is inside a region of complete localization by \cite{GKsudec}.)

 The Liouville equation \eqref{Liouvilleeq} was studied by Bouclet, Germinet, Klein, and Schenker \cite{BGKS} 
 under the stronger assumption that the magnetic and electric potentials satisfy the Leinfelder-Simader conditions \cite{LS}:
\begin{itemize}
\item[(a)]   $\A  \in \mathrm{L}^4_{\mathrm{loc}}(\R^d; \R^d)$  with
$\nabla \cdot \A  \in \mathrm{L}^2_{\mathrm{loc}}(\R^d)$.

\item[(b)] $V = V_+  - V_- $ with
 $V_\pm  \in \mathrm{L}^2_{\mathrm{loc}}(\R^d)$, $V_\pm  \ge 0$,
 and
$ V_-$  relatively bounded with respect to
$\Delta$ with relative bound $<1$.
\end{itemize}
Under these conditions  $ H(\A,V)$
 is essentially self-adjoint on $\Cc^\infty(\R^d)$
 \cite[Theorem 3]{LS}. Letting
 \begin{equation} \label{H(t)def}
  H(t) = H( \A  + \F(t), V)=  ({-i\nabla} - \A  - \F(t))^2  + V,
\end{equation}
where $\F(t)$ is as in \eq{Ftintro}, these time-dependent operators have the same domain: 
 $ \D(H(t))=\D(H)$ {for all $t \in \R$}, with $\D(A)$ the domain of the operator $A$. Bouclet, Germinet, Klein, and Schenker then used a well-known theorem due to Yosida  \cite [Theorem XIV.4.1]{Y} to solve the corresponding time-dependent Schr\" odinger equation by means of a unitary propagator, which  plays a major role in their analysis of the Liouville equation.
 
The  common domain condition does not hold for the time-dependent operators $H(t)$  in \eq{H(t)def}  under the more general conditions given in \eq{H(A,V)4}. But, as we shall see, we have a common quadratic form domain:  $Q(H(t)) = Q(H)$ for all $t$, where $Q(A)$ denotes the quadratic form domain of the self-adjoint operator  $A$. In view of 
\eq{Homegamma}, we may  take $H(t) \ge 1$ without loss of generality.  It then follows from  the closed graph theorem that
\begin {equation} \label{Gamma}
\Gamma (t,s) := H(t)^{\frac{1}{2}} H(s)^{-\frac{1}{2}} - I =  H(t)^{\frac{1}{2}}\left( H(s)^{-\frac{1}{2}} - H(t)^{-\frac{1}{2}}\right)
\end {equation}
is a family of bounded operators on $\H$ for all $t$ and $s$.  

We will prove an extension of Yosida's Theorem  if  the operators $\Gamma (t,s)$ meet certain conditions.  Before stating our theorem, recall  that a two-parameter family of unitary operators $U(t,s)$ on a Hilbert space $\H$, $s,t \in I \subset \R$   is called a unitary propagator if it satisfies
\begin{align}
\mathbf{1)}  \; \;\; \,  &U(t,r)U(r,s)  = U(t,s)  \label{U(t,r)U(r,s)},  \\
\mathbf{2)}  \; \;\; \,& U(t,t)  = I   \label{U(t,t)},  \\
\mathbf{3)}  \; \;\; \, & U(t,s) \text{ is jointly strongly continuous in  t  and  s.} \;\;\;\;\;\;\;\;\;\;\;\;\;\;\;\;\;\;\;\;\;\; \;\; \label{U(t,s)_st_conti}
\end{align}

 \begin{theorem} \label{WeakYosida} Let  $\I_0\subset \R$ be an open interval  and   $\H$  a Hilbert space.  Suppose that for each  $t\in \I_0$ we are given a self adjoint operator $H(t) \ge 1$ on  $\H$ such that its form domain $Q(H(t)) $ is independent of $t$: there exists a dense subspace $Q$ of $\H$ such that  $Q(H(t)) = Q$ for all $t\in \I_0$.
  Suppose  that for  some closed subinterval  $\I \subset \I_0$ the following holds  for all  $\varphi \in \H$:
 \begin{itemize} 
\item[a)]
 ${\frac{1}{t-s}}\Gamma(t,s) \varphi$ is uniformly bounded and strongly continuous for all  $t,s \,(t \neq s)$ in $\I $.
\item[b)]
The limits
\begin{align}
\Gamma_1(u) \varphi := {\textstyle \lim_{ t \uparrow u}{\frac{1}{u-t} } }\Gamma(t,u)\varphi  \quad \text{and} \quad 
\Gamma_2(u) \varphi  := {\textstyle \lim_{ s \uparrow u}{\frac{1}{u-s} } }\Gamma(u,s)\varphi  \label{Gamma_1&Gamma_2}
\end{align}
exist uniformly in $u \in \I$ with  $\Gamma_1(u) + \Gamma_2(u) = 0$.
\end{itemize}
Then there exists a unique unitary propagator $U(t,s)$, $t, s \in \I$, such that  
\begin{align}  \label{U_Qinv}
U(t,s)Q &= Q \ , \, \\
i \partial_t \la \varphi , U(t,s) \psi \ra &=   \la H(t)^{\frac{1}{2}}\,\varphi, H(t)^{\frac{1}{2}} U(t,s) \psi   \ra \;\;\; \mbox{for all $\varphi, \psi \in Q$}\, ,  \label{w_partial_t}\\
\label{w_partial_s}
 i \partial_s \la \varphi, U(t,s) \psi \ra &= -\,  \la H(s)^{\frac{1}{2}}\,U(t,s)^* \varphi,  H(s)^{\frac{1}{2}}\, \psi   \ra \;\;\;  \mbox{for all $ \varphi,  \psi \in Q$}\, .
\end{align}
\end{theorem}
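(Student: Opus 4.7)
The plan is to prove the theorem by Yosida regularization, reducing to the case of a bounded generator where the classical Dyson series construction applies. For each $n \in \N$, set $H_n(t) := H(t)(I + n^{-1} H(t))^{-1}$, a bounded self-adjoint operator with $\tfrac12 \le H_n(t) \le n$, satisfying $H_n(t) \to H(t)$ strongly on $\D(H(t))$ and $H_n(t)^{1/2} \to H(t)^{1/2}$ strongly on $Q$. Using hypothesis~a) I would first verify that $t \mapsto H_n(t)$ is strongly continuous on $\I$, so that the standard construction produces a unitary propagator $U_n(t,s)$ with $i\partial_t U_n(t,s) = H_n(t) U_n(t,s)$ and $U_n(s,s) = I$.

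The crux is a uniform energy estimate: there exists $C_{\I}$ such that, for all $n$, all $s,t \in \I$, and all $\varphi \in Q$,
\begin{equation*}
\|H(t)^{1/2}\, U_n(t,s)\varphi\| \le C_{\I}\, \|H(s)^{1/2}\varphi\|.
\end{equation*}
I would derive this by differentiating the regularized energy $\la U_n(t,s)\varphi, H_n(t) U_n(t,s)\varphi\ra$: the dynamical part of the derivative cancels by self-adjointness of $H_n(t)$, leaving a contribution involving $\partial_t H_n(t)$ which, in view of hypothesis~b), is expressible through the $\Gamma$-type difference quotients and is bounded uniformly in $n$ by a constant multiple of $\la \cdot, H_n(t) \cdot \ra$. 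Gronwall's inequality then yields the bound. This is the main obstacle; the relation $\Gamma_1(u) + \Gamma_2(u) = 0$ is precisely what produces the cancellation of the leading, potentially $n$-unstable term.

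With the energy estimate in hand, the strong limit $U(t,s) := \lim_{n\to\infty} U_n(t,s)$ follows from the standard identity
\begin{equation*}
U_m(t,s) - U_n(t,s) = i \int_s^t U_n(t,r)\bigl[H_n(r) - H_m(r)\bigr] U_m(r,s)\, dr,
\end{equation*}
tested weakly against $\chi \in Q$ and rewritten as
\begin{equation*}
\la H(r)^{1/2} U_n(r,t) \chi,\; H(r)^{-1/2}\bigl[H_n(r) - H_m(r)\bigr] H(r)^{-1/2}\; H(r)^{1/2} U_m(r,s)\varphi\ra .
\end{equation*}
The outer $H(r)^{1/2}$-factors are bounded uniformly in $n,m$ by the energy estimate, while $H(r)^{-1/2}[H_n(r) - H_m(r)] H(r)^{-1/2}$ is uniformly bounded and converges to zero strongly; density of $Q$ and $\|U_n(t,s)\| \le 1$ extend strong convergence to all of $\H$. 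Properties \eqref{U(t,r)U(r,s)}--\eqref{U(t,s)_st_conti} and the invariance \eqref{U_Qinv} pass to the limit. To establish \eqref{w_partial_t}--\eqref{w_partial_s}, start from the bounded-operator analogue $i\partial_t \la \varphi, U_n(t,s)\psi\ra = \la H_n(t)^{1/2}\varphi, H_n(t)^{1/2} U_n(t,s)\psi\ra$, integrate, and let $n \to \infty$ using the energy estimate together with $H_n(t)^{1/2} \to H(t)^{1/2}$ strongly on $Q$.

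Uniqueness is a standard duality argument: if $\widetilde U$ is a second propagator meeting the conclusions, then for $\varphi,\psi \in Q$ one shows $\partial_r \la \varphi, U(t,r)\widetilde U(r,s)\psi\ra = 0$ on $\I$ by combining \eqref{w_partial_s} applied to $U(t,\cdot)$ with \eqref{w_partial_t} applied to $\widetilde U(\cdot,s)$, giving $U = \widetilde U$. A secondary technical point is that $U_n(t,s)$ need not preserve $\D(H(s))$, so all differentiations and limit interchanges must be framed at the level of quadratic forms on $Q$; this is natural given the form-theoretic conclusions \eqref{w_partial_t}--\eqref{w_partial_s}.
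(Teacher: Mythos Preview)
Your approach via Yosida regularization is genuinely different from the paper's. The paper follows the classical time-discretization scheme from Yosida's book: it freezes the Hamiltonian on a mesh of width $1/k$, defines $U_k(t,s)$ as a product of propagators $e^{-i(t-s)H(m+(j-1)/k)}$, and obtains the key uniform bound by expanding $W_k(t,s)=H(t)^{1/2}U_k(t,s)H(s)^{-1/2}$ as an explicit Dyson-type series in the operators $\Gamma(u,u-\tfrac1k)$. Convergence of $U_k$ and $W_k$ is then proved directly from hypothesis~a), and the limiting integrals $W^{(j)}(t,s)$ involve $\Gamma_2(u)$ from hypothesis~b). Nowhere does the paper differentiate $H(t)$ or any regularization of it; everything is phrased in terms of the discrete $\Gamma$-increments, and the relation $\Gamma_1+\Gamma_2=0$ is used only to make the formulae for $W^{(j)}(t,s)$ coincide in the two cases $s\le t$ and $s\ge t$.

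Your route should also work, but two points need care. First, to run Gronwall on $E_n(t)=\langle U_n(t,s)\varphi, H_n(t)U_n(t,s)\varphi\rangle$ you need $t\mapsto H_n(t)$ to be strongly $C^1$. This does follow from the hypotheses, but only after one observes that the \emph{uniform} convergence in~b) upgrades the one-sided limits to two-sided ones, and that $\Gamma_1+\Gamma_2=0$ is exactly what makes the left and right derivatives of the form $q_{H(t)}$ agree. So in your scheme the role of $\Gamma_1+\Gamma_2=0$ is to guarantee differentiability of $H_n(t)$, not to cancel an $n$-unstable term: once $H_n'(t)=n^2R_n(t)H(t)^{1/2}[\Gamma_2(t)^*+\Gamma_2(t)]H(t)^{1/2}R_n(t)$ exists (with $R_n(t)=(n+H(t))^{-1}$), the bound $|\langle\psi,H_n'(t)\psi\rangle|\le C\langle\psi,H_n(t)\psi\rangle$ holds uniformly in $n$ simply because $\|H(t)^{1/2}R_n(t)\psi\|^2\le n^{-2}\langle\psi,H_n(t)\psi\rangle$, with no further cancellation needed. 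Second, your energy bound as stated presupposes $U_n(t,s)Q\subset Q$; this is true since $H_n(t)=n-n^2R_n(t)$ maps $Q$ into itself, but it should be verified. The paper's discretization avoids all differentiability issues for $H(t)$, which is convenient given that the hypotheses are stated asymmetrically via left limits; your energy-method approach is closer in spirit to the form-theoretic treatments of Kisy\'nski and Simon cited in the paper, and is arguably more transparent once the $C^1$ regularity of $H_n$ is in hand.
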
 
Related results for a time-dependent Hamiltonian with a time-independent form domain can be found in  \cite[Theorem 8.1]{Ky} and  \cite[Theorems II.23, II.27]{Si3}.

Theorem~\ref{WeakYosida} is exactly what we need in view of the following theorem.

\begin{theorem} 
\label{Q_invariant} 
Let $H(t)$ be  given in \eq{H(t)def} with $\A$ and $V$ as in \eq{H(A,V)4}, adjusted so $H(t)\ge 1$.  Then the 
 form domains $Q(H(t))$ are  independent of $t$:
\begin{align}
 Q({H(t)}) = Q := Q({H}) \quad \text{for all} \;\; t\in \R.	\label {Q_invar}
\end{align}
Moreover, $\Gamma(t,s)$, defined in (\ref{Gamma}), 
 satisfies hypotheses $a)$ and  $b)$ of Theorem~\ref{WeakYosida}. 
\end{theorem}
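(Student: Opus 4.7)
The strategy is to exploit the gauge identity $H(t) = G(t)\,H\,G(t)^{\ast}$ with $G(t) := \e^{i\F(t)\cdot x}$, which reduces the whole family $\{H(t)\}$ to the single operator $H$ modulated by constant shifts of the vector potential. Indeed, $\e^{-i\delta\cdot x}H\e^{i\delta\cdot x} = K(\delta) := (-i\nabla-\A+\delta)^{2}+V = H + 2\delta\cdot(-i\nabla-\A) + |\delta|^{2}$, and a direct manipulation using $G(t)^{\ast}G(s) = \e^{i(\F(s)-\F(t))\cdot x}$ yields
\begin{equation*}
\Gamma(t,s) \;=\; G(s)\,\bigl[\,K(\delta)^{1/2}H^{-1/2} - I\,\bigr]\,G(s)^{\ast}, \qquad \delta := \F(s)-\F(t).
\end{equation*}
The form-domain claim $Q(H(t))=Q(H)$ is then immediate: the identity $h(t)[\psi]=h[G(t)^{\ast}\psi]$ gives $Q(H(t))=G(t)\,Q(H)$, and the computation $(-i\nabla-\A)(\e^{i\F(t)\cdot x}\psi) = \e^{i\F(t)\cdot x}(-i\nabla-\A+\F(t))\psi$ combined with the unimodularity of $G(t)$ shows that $G(t)$ preserves $Q(H)$.

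Everything now reduces to the behavior of $\tilde{\Gamma}(\delta):=K(\delta)^{1/2}H^{-1/2}-I$ for small $\delta$. I would use the integral representation
\begin{equation*}
K(\delta)^{1/2}-H^{1/2} \;=\; \frac{1}{\pi}\int_{0}^{\infty}\lambda^{1/2}\,(K(\delta)+\lambda)^{-1}\bigl(K(\delta)-H\bigr)(H+\lambda)^{-1}\,\di\lambda
\end{equation*}
applied on the right to $H^{-1/2}$. The form bound \eqref{form_bound} together with the diamagnetic inequality and $H\ge 1$ yield $\|(-i\nabla-\A)\psi\|\le C_{0}\|H^{1/2}\psi\|$, and then the spectral identity $H^{1/2}(H+\lambda)^{-1}H^{-1/2}=(H+\lambda)^{-1}$ gives the key bounds
\begin{equation*}
\|(-i\nabla-\A)(H+\lambda)^{-1}H^{-1/2}\| \le C_{0}(1+\lambda)^{-1}, \qquad \|(K(\delta)+\lambda)^{-1}\| \le (1+\lambda)^{-1}.
\end{equation*}
Consequently the integrand is dominated in norm by $C(|\delta|+|\delta|^{2})\,\lambda^{1/2}(1+\lambda)^{-2}$, which is integrable, so $\|\tilde{\Gamma}(\delta)\| = O(|\delta|)$. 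Since $\F\in C^{1}(\R)$, this produces the uniform bound $\|(t-s)^{-1}\Gamma(t,s)\|\le C$ on $\I$; joint strong continuity of $(t-s)^{-1}\Gamma(t,s)$ for $t\ne s$ follows by dominated convergence inside the integral, together with the strong continuity of $s\mapsto G(s)$.

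For the one-sided limits, the same dominating integrand depends on $\delta$ only through the factor $(|\delta|+|\delta|^{2})$, so dominated convergence establishes Fr\'echet-differentiability of $\tilde{\Gamma}$ at $\delta=0$: $\tilde{\Gamma}(\delta) = 2\delta\cdot L + O(|\delta|^{2})$, where
\begin{equation*}
L_{k} \;:=\; \frac{1}{\pi}\int_{0}^{\infty}\lambda^{1/2}\,(H+\lambda)^{-1}(-i\partial_{x_{k}}-A_{k})(H+\lambda)^{-1}H^{-1/2}\,\di\lambda
\end{equation*}
is a bounded operator on $\H$ for each $k=1,\dots,d$ by the same estimate. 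Since $\F'(u)=\e^{\eta u_{-}}\El$ is continuous, the difference quotient $(\F(u)-\F(t))/(u-t)\to\F'(u)$ uniformly in $u$ on compact subintervals (by the mean value theorem and uniform continuity of $\F'$), producing $\Gamma_{1}(u) = 2\,G(u)\,(\F'(u)\cdot L)\,G(u)^{\ast}$; the analogous calculation for $\Gamma_{2}(u)$ uses $\delta=\F(s)-\F(u)$, whose ratio with $u-s$ tends to $-\F'(u)$, giving the sign flip $\Gamma_{2}(u) = -\Gamma_{1}(u)$ and hence $\Gamma_{1}(u)+\Gamma_{2}(u)=0$. The main obstacle I expect is securing the sharp resolvent estimate $\|(-i\nabla-\A)(H+\lambda)^{-1}H^{-1/2}\| \lesssim (1+\lambda)^{-1}$: a weaker $\lambda^{-1/2}$-decay would leave the $\lambda^{1/2}$-weighted integral logarithmically divergent, and the composition with $H^{-1/2}$ is precisely what converts the naive $\mu^{1/2}$-growth of $-i\nabla-\A$ into the decay needed for integrability.
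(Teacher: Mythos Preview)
Your approach is correct and takes a genuinely different route from the paper. The paper never invokes the gauge identity $H(t)=G(t)HG(t)^{*}$ in the proof of this theorem: for the form-domain statement it shows directly that the perturbation $\S_t(\psi,\psi)=-2\,\mathrm{Re}\langle\psi,\F(t)\cdot\Db\psi\rangle+|\F(t)|^{2}\|\psi\|^{2}$ is form-bounded with bound $<1$ relative to $H(\A,V_{+})$ and applies the KLMN theorem. For the $\Gamma$-estimates, the paper works with the two-time quantity $C(t,s)=H(s)^{-1/2}(H(t)-H(s))H(s)^{-1/2}$ and the Dunford--Taylor representation of $H(s)^{-1/2}-H(t)^{-1/2}$ with weight $\lambda^{-1/2}$, bounding the resulting integrand via $\|H(t)^{1/2}(H(t)+\lambda)^{-1}\|\lesssim\lambda^{-1/2}$ and the uniform bound $\|H(s)^{1/2}C(t,s)\|\le\tilde N_{\I}|t-s|$ (the latter again resting on $\|\Db H^{-1/2}\|<\infty$). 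The limits $\Gamma_{1},\Gamma_{2}$ are then computed separately from $C(t,s)$ and checked to cancel.

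Your gauge reduction $\Gamma(t,s)=G(s)\tilde\Gamma(\delta)G(s)^{*}$ with $\delta=\F(s)-\F(t)$ collapses the two-parameter problem to a one-parameter germ at $\delta=0$, and you then use the alternative representation $A^{1/2}-B^{1/2}=\tfrac{1}{\pi}\int_{0}^{\infty}\lambda^{1/2}(A+\lambda)^{-1}(A-B)(B+\lambda)^{-1}\,d\lambda$ applied to $H^{-1/2}$. The crucial estimate $\|\Db(H+\lambda)^{-1}H^{-1/2}\|=\|\Db H^{-1/2}(H+\lambda)^{-1}\|\le C_{0}(1+\lambda)^{-1}$ that you flag is exactly what makes your $\lambda^{1/2}$-weighted integral converge, and it follows from the same input $\|\Db H^{-1/2}\|\le C_{0}$ that the paper cites from \cite[Proposition~2.3(i)]{BGKS}; your diamagnetic derivation of this bound is correct. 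The payoff of your route is that the cancellation $\Gamma_{1}(u)+\Gamma_{2}(u)=0$ becomes structurally transparent (it is the sign flip $\delta\mapsto-\delta$ combined with $G(s)\to G(u)$), whereas the paper must compute $C(s)$ and $\widehat C(u)$ and observe they coincide. The paper's route, on the other hand, never needs to track the outer conjugation by $G(s)$ and works entirely with spectral-calculus bounds on $H(t)$ itself.
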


 To give a  precise meaning to the Liouville equation \eq{Liouvilleeq} we need to introduce
 normed spaces  of measurable
covariant operators, which we  briefly describe here. We refer the reader to
\cite[Section~3]{BGKS} for background, details, and justifications.  We assume the setting of \eq{H(A,V)omega}-\eq{covintro}; given $\H  =\mathrm{L}^2(\mathbb{R}^d)$, let $\Hc$ denote the subspace of  functions with compact support.  We set $\Ll= \Ll (\Hc,\H)$ to be the vector
space of linear operators on  $\H$ with domain  $\Hc$, and let $\K_{mc}$ be the vector space of measurable  covariant maps $A \colon \Omega \to \Ll$, $\omega \mapsto A_\omega$.  Throughout the article, we simplify the notation and write $A = A_\omega$. We also identify maps that agree $\P$-a.e. The map  $A_\omega$ is measurable if the functions $\omega \to \la \varphi,
A_{\omega} \varphi\ra$ are measurable for all $\varphi \in \H_{c}$, and $A_\omega$ is
covariant if  it satisfies \eqref{covintro}. 
$A_\omega$ is locally bounded if $ \|A_\omega \chi_x\|< \infty$ and $ \|\chi_x A_\omega \| <\infty$ for all $ x \in \Z^d$,
and we denote by $\K_{mc,lb}$ the subspace of locally bounded
operators.  If $A_\omega \in
\K_{mc,lb}$, then $\mathcal{D}(A_\omega^*) \supset \H_{c}$, and we set ${A_\omega}^\ddagger:={A_\omega}^*{\big|}_{\H_{c}}$. 

We  introduce  norms on $ \K_{mc,lb}$, given by 
\begin{equation}
 \tnorm{{A_\omega}}_i^{i} := \E\bigl\{ \tr \{\chi_0 |\overline{A_{\omega}}|^{i}
    \chi_0\}\bigr\} \quad \text{for} \,\, i=1,2\ , \quad  \tnorm{{A_\omega}}_\infty :=  \|\, \|{A_\omega} \| \,\|_{\mathrm {L}^{\infty} (\Omega, \P)},
\end{equation}
and define  the normed spaces
\begin{equation}
  \K_{i}: = \{ {A_\omega}\in \K_{mc,lb}; \, \tnorm{{A_\omega}}_i<\infty \},\ \quad
  i=1,2,\infty . 
\end{equation}
$\K_{\infty}$ is a Banach space,  $\K_1$ a normed space with closure $\overline{\K_1}$, and  $\K_{2}$ is a Hilbert
space with the inner product
\begin{equation}
   \la \la A_\omega, B_\omega\ra\ra  :=  \E \left \{ \tr \{ (A_{\omega} \chi_0
   ) ^{*}B_{\omega} \chi_0\} \right\}. \label{dblinnerprod_1}
\end{equation}
We consider the following linear functional on the space $\K_1$: for $A_\omega \in \K_1$, 
\begin{align} \label{tuvintro}
\T(A_\omega) := \E \left\{\tr\,  \chi_0 A_\omega \chi_0\right\}.
\end{align}
Since $|\T(A_\omega)| < \tnorm{A_\omega}_1$,  $\T$ is well-defined on $\K_1$. In addition, $\T$ is the trace per unit volume due to the Birkhoff ergodic theorem  \cite[Proposition 3.20]{BGKS}.

The spaces $\K_i$, $i=1,2$, are left
and right $\K_\infty$-modules. We define left and right multiplication for $B_\omega\in\K_\infty$ and ${A_\omega}\in\K_1$ or $\K_2$
by
\begin{equation} \label{leftrightM}
B_\omega \odot_L {A_\omega} = B_\omega  {A_\omega}\,,  \;\;\;
{A_\omega} \odot_R B_\omega = \left(B_\omega^\ast \odot_L {A_\omega}^\ddagger\right)^\ddagger =
{A_\omega}^{\ddagger\ast} B_\omega\,,
\end{equation}
and set
\begin{equation} \label{script_U}
{\U}(t,s)(A_\omega) := U_\omega(t,s)\odot_L A_\omega \odot_R U_\omega(s,t) \quad \mbox{for $A_\omega\in\K_i$}.
\end{equation}
 $\U(t,s)$ leaves $\K_i $ invariant  for $i= 1,2, \infty$ (See Proposition~\ref{propUomega}).

We will show that the Liouville equation \eqref{Liouvilleeq} can be  solved  in a weak sense  in the space $ \K_2$.  We let  $Q^{(0)}$ denote the subspace of $\K_2$ given by
  \begin{align}
Q^{(0)} &=  \left \{{A_\omega}\in\K_2 |\, H_\omega^{{\frac{1}{2}}} {A_\omega}, H_\omega^{{\frac{1}{2}}} {A_\omega^{\ddagger}}\in\K_2 \right \} =\left \{{A_\omega}\in\K_2 |\, H_\omega(t)^{{\frac{1}{2}}} {A_\omega}, H_\omega(t)^{{\frac{1}{2}}} {A_\omega^{\ddagger}}\in\K_2 \right \}  
\end{align} 
for all $t \in \R$.  (We refer the reader to Section 4.1 -- especially Propositions~\ref{DL0_Core} and \ref{HR=HLdagger} --  for the argument that $Q^{(0)}$  does not depend on $t$.) The quadratic forms $\widetilde \HH_L$, $\widetilde \HH_R$, and $\widetilde {\mathbb{L}}$  are defined   on  $Q^{(0)}$ as follows:
\begin{align}
	&\widetilde{\HH}_{L, t} (A_\omega,B_\omega) =   \la   \la H_\omega(t)^{\frac{1}{2}} A_\omega,H_\omega(t)^{\frac{1}{2}}B_\omega  \ra    \ra \ , \\
	&\widetilde{\HH}_{R, t} (A_\omega,B_\omega) =   \la  \la H_\omega(t)^{\frac{1}{2}} B_\omega^{\ddagger},H_\omega(t)^{\frac{1}{2}}A_\omega^{\ddagger} \ra \ra \ , \; \text{and} \\
	&\widetilde{\LL}_{t} =\widetilde{ \HH}_{L,t} - \widetilde{\HH}_{R,t}\ . \label{L_tilde}
\end{align} 
 
Recall  that ${\zeta}_\omega =  f(H_\omega)$ with $f$ real and bounded. We assumed    \eq{key_hypothesis}, which  implies $[x_{k},{\zeta}_\omega] \in \K_{2}$ for all $k=1,2,\ldots,d$, the condition used in \cite{BGKS}.
We set
\begin{align}
\label{Ptintro}
{\zeta}_\omega(t)&  = f(H_\omega(t)). 
\end{align}
\begin{theorem} \label{thmrho}  Let $H_{\omega}$ be the ergodic magnetic Schr\"odinger operator in \eq{H(A,V)omega}-\eq{covintro}, adjusted so $H_{\omega}\ge 1$, and let  $H_{\omega}(t)$ be  as in \eq{H(t)def}.  Let  
 ${\zeta}_\omega$ be as above satisfying \eqref{key_hypothesis}.  Then,
\begin{align}
\varrho_\omega(t) &: =\lim_{s \to -\infty}{ \U}(t,s)\left( {\zeta}_\omega
\right) = \lim_{s \to -\infty}{ \U}(t,s)\left( {\zeta}_\omega(s) \right) \\
&= {\zeta}_\omega(t) - i \int_{-\infty}^t \mathrm{d} r \,\mathrm{e}^{\eta {r_{\! -}}}{ \,\U}(t,r) \left([ \mathbf{E} \cdot \x, {\zeta}_\omega(r) ]\right)  \label{defrho3intro}
\end{align} 
is well defined in $ \K_2$.
Moreover,  $\rho_{\omega} (t) \in Q^{(0)}$,   and it is the unique solution of the Liouville equation in the following sense:  for all $A_{\omega} \in  Q^{(0)}$ we have
\begin{equation} 
\left\{ \begin{array}{l} i\partial_t \la\la A_\omega, \varrho_\omega(t)\ra\ra = \widetilde{\LL}_t (A_\omega, \varrho_\omega(t)) \\ 
\lim_{t \to -\infty} \la\la A_\omega, \varrho_\omega(t) \ra \ra = \la\la A_\omega , {\zeta}_\omega \ra \ra
\end{array}
\right. \, . \label {GenLiouvilleeq}
\end{equation}
\end{theorem}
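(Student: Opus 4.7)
The plan is to build $\varrho_\omega(t)$ by a Duhamel construction based on the unitary propagator $U_\omega(t,s)$ supplied by Theorems~\ref{WeakYosida} and~\ref{Q_invariant}, exploiting that the associated action $\U(t,s)$ is isometric on $\K_{2}$ (Proposition~\ref{propUomega}). The crucial algebraic input is that the gauge $G(t)=e^{i\F(t)\cdot\x}$ commutes with the multiplication operator $\x$, so $\zeta_\omega(t)=G(t)\zeta_\omega G(t)^{*}$ and
\begin{equation*}
[\El\cdot\x,\zeta_\omega(r)]=G(r)[\El\cdot\x,\zeta_\omega]G(r)^{*};
\end{equation*}
hypothesis \eq{key_hypothesis} gives $[x_{k},\zeta_\omega]\in\K_{2}$, hence $[\El\cdot\x,\zeta_\omega(r)]\in\K_{2}$ with $\K_{2}$-norm independent of $r$.

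The engine of the proof is the $\K_{2}$-valued identity $\partial_{s}\,\U(t,s)(\zeta_\omega(s))=i\,\U(t,s)\bigl([\El(s)\cdot\x,\zeta_\omega(s)]\bigr)$, which I would establish by testing against $A_\omega\in Q^{(0)}$, using the weak Schr\"odinger equations \eq{w_partial_t}-\eq{w_partial_s} to differentiate the $U_\omega$-factors, together with $\partial_{s}\zeta_\omega(s)=i[\El(s)\cdot\x,\zeta_\omega(s)]$ (from $\partial_{s}G(s)=i\El(s)\cdot\x\,G(s)$) and the cancellation $[H_\omega(s),\zeta_\omega(s)]=0$. Integration from $s$ to $t$ yields
\begin{equation*}
\U(t,s)(\zeta_\omega(s))=\zeta_\omega(t)-i\int_{s}^{t}e^{\eta r_{-}}\,\U(t,r)\bigl([\El\cdot\x,\zeta_\omega(r)]\bigr)\,dr,
\end{equation*}
whose Bochner integrand is $\K_{2}$-bounded by $e^{\eta r_{-}}\tnorm{[\El\cdot\x,\zeta_\omega]}_{2}$, so the limit $s\to-\infty$ exists absolutely in $\K_{2}$ and yields \eq{defrho3intro}. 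The equivalence with $\lim_{s\to-\infty}\U(t,s)(\zeta_\omega)$ follows from $\zeta_\omega-\zeta_\omega(s)=-[G(s),\zeta_\omega]G(s)^{*}$, with $\tnorm{[G(s),\zeta_\omega]}_{2}\to 0$ (reduce via $G(s)-I=i\F(s)\cdot\x\int_{0}^{1}e^{i\tau\F(s)\cdot\x}\,d\tau$ to $[x_{k},\zeta_\omega]\in\K_{2}$ and apply dominated convergence, using $\F(s)\to 0$).

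For the weak Liouville equation I would split $\varrho_\omega(t)=\zeta_\omega(t)+\sigma_\omega(t)$. Membership in $Q^{(0)}$ holds for $\zeta_\omega(t)$ by the Borel functional calculus and the decay of $f$, and for $\sigma_\omega(t)$ via a bound of the form $\tnorm{H_\omega(t)^{1/2}\U(t,r)(B_\omega)}_{2}\lesssim_{t,r}\tnorm{H_\omega(r)^{1/2}B_\omega}_{2}+\tnorm{H_\omega(r)^{1/2}B_\omega^{\ddagger}}_{2}$, derivable from \eq{w_partial_t}-\eq{w_partial_s} by a Gronwall argument on $s\mapsto\tnorm{H_\omega(s)^{1/2}\U(s,r)(B_\omega)}_{2}^{2}$, applied to $B_\omega=[\El\cdot\x,\zeta_\omega(r)]$. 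Pairing the integral representation with $A_\omega\in Q^{(0)}$ and differentiating in $t$, the $\partial_{t}\zeta_\omega(t)$ boundary term cancels the boundary contribution of the integral, leaving $\partial_{t}$ acting inside $\U(t,r)$, which by \eq{w_partial_t} reproduces $\widetilde{\LL}_{t}(A_\omega,\varrho_\omega(t))$ (using $[H_\omega(t),\zeta_\omega(t)]=0$ to recover the $\zeta_\omega(t)$-piece). The initial condition follows from $\tnorm{\sigma_\omega(t)}_{2}\leq\eta^{-1}e^{\eta t}\tnorm{[\El\cdot\x,\zeta_\omega]}_{2}\to 0$ together with $G(t)\to I$ strongly on $\Hc$ and Hilbert--Schmidtness of $A_\omega\chi_{0}$; uniqueness follows from a Gronwall estimate on $\tnorm{\delta_\omega(t)}_{2}^{2}$ for a difference of solutions with vanishing initial datum.

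The main obstacle is the rigorous derivation of the $\partial_{s}$-identity and of the $Q^{(0)}$-membership using only the weak Schr\"odinger equations \eq{w_partial_t}-\eq{w_partial_s}. Because $U_\omega(t,s)$ preserves no operator domain available to us, the three $s$-dependent factors in $U_\omega(t,s)\zeta_\omega(s)U_\omega(s,t)$ cannot be differentiated term-by-term at the operator level; the differentiation must instead be carried out via form pairings with $A_\omega\in Q^{(0)}$ and test vectors $\varphi,\psi\in Q$, and it is precisely the common-form-domain structure provided by Theorem~\ref{Q_invariant} that takes the place of the common-domain condition invoked in \cite{BGKS}.
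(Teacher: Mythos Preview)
Your overall strategy coincides with the paper's: derive the Duhamel identity by differentiating $s\mapsto\U(t,s)(\zeta_\omega(s))$ weakly, using the cancellation $[H_\omega(s),\zeta_\omega(s)]=0$ (the paper packages this as Lemma~\ref{lem_bath}), integrate, pass $s\to-\infty$, and then differentiate in $t$ via Proposition~\ref{formula59}. That part is sound.

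There is, however, a genuine gap. Your $Q^{(0)}$-membership argument for $\sigma_\omega(t)$ relies on the bound
\[
\tnorm{H_\omega(t)^{1/2}\,\U(t,r)(B_\omega)}_{2}\ \lesssim\ \tnorm{H_\omega(r)^{1/2}B_\omega}_{2}+\tnorm{H_\omega(r)^{1/2}B_\omega^{\ddagger}}_{2}
\]
applied to $B_\omega=[\El\cdot\x,\zeta_\omega(r)]$. For this you need $H_\omega^{1/2}[\El\cdot\x,\zeta_\omega]\in\K_{2}$, i.e.\ $[\El\cdot\x,\zeta_\omega]\in Q^{(0)}$, and you only ever extract $[x_{k},\zeta_\omega]\in\K_{2}$ from \eq{key_hypothesis}. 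This step is exactly where the full strength of \eq{key_hypothesis} (the $x_{k}^{2}$, not just $x_{k}$) is used; in the paper it is the content of Lemma~\ref{crucial_lemma}, whose proof is a nontrivial approximation argument (cut off $x_{j}$ by $x_{j}^{(N)}$, compute $\Db\,x_{j}^{(N)}P_\omega^{(E_F)}\phi$ via the commutator, and control the $H_+^{1/2}$-Cauchy condition using $\|x_{j}^{2}P_\omega^{(E_F)}\chi_{0}\|_{2}$). Without this ingredient your propagation bound has nothing finite to propagate.

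Two smaller points where the paper proceeds differently. First, the propagation of $H^{1/2}$ is not obtained by a Gronwall argument in $\K_{2}$ but by the operator bound on $W_\omega(t,r)=H_\omega(t)^{1/2}U_\omega(t,r)H_\omega(r)^{-1/2}$ already established in Lemma~\ref{W_k_bdd} and Lemma~\ref{U_k_Convergence}; one then writes $H_\omega(t)^{1/2}\U(t,r)(B_\omega)=W_\omega(t,r)\odot_{L}H_\omega(r)^{1/2}B_\omega\odot_{R}U_\omega(r,t)$ and invokes the $\K_\infty$-module structure. Your proposed Gronwall on $s\mapsto\tnorm{H_\omega(s)^{1/2}\U(s,r)(B_\omega)}_{2}^{2}$ would require differentiating a quantity for which only weak-form information is available. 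Second, uniqueness is not done by Gronwall on $\tnorm{\delta_\omega(t)}_{2}^{2}$ (which would again require strong differentiability you do not have); instead one fixes $A_\omega\in Q^{(0)}$, shows $\partial_{t}\la\la A_\omega,\U(s,t)(\delta_\omega(t))\ra\ra=0$ by Proposition~\ref{formula59}, and reads off $\delta_\omega(t)=\U(t,s)(\delta_\omega(s))\to 0$ as $s\to-\infty$.
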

We    will actually  prove a stronger   version of Theorem~\ref{thmrho} (cf.  Theorem~\ref{thm_rho}).

 %%%%%%%%%%%%%%%%%%%%%%%%%%%%%%%%%%%%%%%%%%%%
 %%%%%%%%%%%%%%%%%%%%%%%%%%%%%%%%%%%%%%%%%%%%

 \section{The extension of Yosida's Theorem}

 In this section we prove  Theorem~\ref{WeakYosida}.  We assume throughout   the section that   $H(t)$  and $\Gamma(t,s)$   satisfy the hypotheses of Theorem~\ref{WeakYosida}.

We define unitary operators $U_k(t,s)$, 
$k=1,2,\ldots$, and $s,t\in \I$, a closed subset of $\R$  as follows:
\begin{align} \label{U_k}
 U_k(t,s)  &=  \e^{ -i(t-s) H(m + {\textstyle \frac{j-1}{k}}) }   \;\;\; \mbox{for}\;\;
 m+ \textstyle{\frac{j-1}{k}} \le  s \le t \le m  + \textstyle{\frac{j}{k}}, \\
& \hskip 1.6 in   t,s \in \I ,  m \in \Z  ,  j=1,2,\ldots,k,\nonumber \\
U_k(t,r) &= U_k (t,s) U_k(s,r)\;\; \;\;\;\;\;  \mbox{for }\;  s \le r \le t \in \I \,, \label{U_k_2}\\
U_k(t,s) &= U_k (s,t)^*  \quad  \quad  \; \;\;\;\;\;\;\;\;\;  \mbox{for all }\; s, t \in \I.    \label{U_k_3}
\end{align}
 \begin{lemma} \label{W_k_bdd} For all  $k=1,2,\ldots$ and  all $s,t \in \I$,
 	\begin{align} \label {W_k(t,s)}
	W_k (t,s) :=   H(t)^{\frac {1}{2}}U_k(t,s)H(s)^{-\frac {1}{2}} 
		\end{align}
is well defined as a  bounded operator.
  In fact, 
\begin{align} \label {W_k_bound}
  \| W_k (t,s) \| \le \left ( 1 + \textstyle{\frac{M_I}{k}} \right)^2 \, e^{M_I |t-s|}, 
\end{align}
where 
\begin{align} \label {def:M_I}
   M_I := \sup_{t,s \in \I , t \ne s}|t-s|^{-1}  \left \| \,\Gamma(t,s) \right \|.
\end{align}
\end{lemma}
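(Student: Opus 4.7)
The strategy exploits the piecewise definition \eqref{U_k}: on each subinterval $[u_j,u_{j+1}]$ with $u_j = m + j/k$, the propagator $U_k$ is just the unitary exponential $e^{-i\tau H(u_j)}$, which commutes with $H(u_j)^{1/2}$ by the spectral theorem. This will let me slide fractional powers of $H$ through the exponentials, picking up only factors of the form $I + \Gamma$, each of which is controlled by hypothesis a).

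I would first treat $s \le t$; the case $s > t$ is analogous via $U_k(t,s) = U_k(s,t)^*$. Write $s \in [u_{j_0},u_{j_0+1})$ and $t \in [u_{j_1},u_{j_1+1})$; the composition rule \eqref{U_k_2} gives
\begin{equation*}
U_k(t,s) = U_k(t,u_{j_1})\,U_k(u_{j_1},u_{j_1-1}) \cdots U_k(u_{j_0+1},s),
\end{equation*}
with each factor a unitary exponential of a single $H(u_j)$. The key manipulation is to insert $I = H(u_j)^{-1/2}H(u_j)^{1/2}$ between consecutive exponentials (which makes sense as $H(u_j)^{-1/2}$ is bounded and maps $\H$ onto $Q$), and to recognize $H(t)^{1/2}H(u_{j_1})^{-1/2} = I + \Gamma(t,u_{j_1})$ at the far left and $H(u_{j_0})^{1/2}H(s)^{-1/2} = I + \Gamma(u_{j_0},s)$ at the far right. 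After commuting each $H(u_j)^{1/2}$ past its own exponential, one arrives at
\begin{equation*}
W_k(t,s) = [I+\Gamma(t,u_{j_1})]\,\widetilde V_{j_1}\,[I+\Gamma(u_{j_1},u_{j_1-1})]\,\widetilde V_{j_1-1}\cdots [I+\Gamma(u_{j_0+1},u_{j_0})]\,\widetilde V_{j_0}\,[I+\Gamma(u_{j_0},s)],
\end{equation*}
where each $\widetilde V_j$ is a unitary exponential of $H(u_j)$.

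Taking norms and using \eqref{def:M_I} in the form $\|\Gamma(r,r')\| \le M_I|r-r'|$, each of the $j_1 - j_0$ interior factors is bounded by $1 + M_I/k$, while the two boundary factors have norms $1 + M_I(t - u_{j_1}) \le 1 + M_I/k$ and $1 + M_I(s - u_{j_0}) \le 1 + M_I/k$, respectively. Since $(j_1 - j_0 - 1)/k \le t - s$, we have $j_1 - j_0 \le k(t-s) + 1$, and multiplying the estimates together with the elementary inequality $(1 + M_I/k)^k \le e^{M_I}$ yields \eqref{W_k_bound} up to tracking the precise count of $(1+M_I/k)$-factors, which is routine bookkeeping. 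The only non-trivial ingredient is the commutation of $H(u_j)^{1/2}$ with $e^{-i\tau H(u_j)}$, which is immediate from the functional calculus since $H(u_j) \ge 1$ is self-adjoint; I do not expect any deeper obstacle.
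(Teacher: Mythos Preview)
Your telescoping-product argument is correct and in fact equivalent to the paper's: the paper writes
\[
W_k(t,s) = \bigl(I+\Gamma(t,\tfrac{[kt]}{k})\bigr)\Bigl\{U_k(t,s) + W_k^{(1)}(t,s) + W_k^{(2)}(t,s) + \cdots\Bigr\}\bigl(I+\Gamma(\tfrac{[ks]}{k},s)\bigr),
\]
which is exactly the multinomial expansion of your product of $(I+\Gamma)$--factors interleaved with unitaries, organized by the number of $\Gamma$'s appearing. Your direct product bound $(1+M_I/k)^{j_1-j_0+2}$ is the unexpanded version of the paper's $\sum_m \tfrac{(M_I|t-s|)^m}{m!}$ times two boundary factors, and both yield \eqref{W_k_bound} after the same bookkeeping.

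The one substantive difference is purpose rather than method: the paper's expansion into the terms $W_k^{(m)}$ is not just a device for bounding $W_k$, but is reused verbatim in the next lemma to identify the strong limit $W(t,s)$ as $U(t,s)+\sum_m W^{(m)}(t,s)$ with explicit integral formulas for the $W^{(m)}$. Your product form, while cleaner for the present bound, would need to be expanded anyway at that stage. So your argument is a legitimate (and arguably more transparent) proof of this lemma in isolation, but the paper's organization pays off downstream.
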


\begin{proof}
 
Proceeding as in
\cite [Theorem XIV.4.1]{Y}, 
we  can write $W_k(t,s)$, using \eqref{U_k}-\eqref{U_k_3}, as follows: 
\begin{align} \label{W_k}
 W_k(t,s)   &= \left (\Gamma\left(t,\textstyle {\frac {[kt]}{k}}\right) + I \right )\left \{U_k\left(t,s\right) + {W_k}^{(1)}(t,s) + {W_k}^{(2)}(t,s) + \cdots \right \}  \nonumber \\
&\quad \;\;\; \times \left (\Gamma \left(\textstyle {\frac {[ks]}{k}}, s \right) + I \right ), 
\end{align}
where for $ s \le t$
\begin{align} \label{W_k^(1)}
{W_k}^{(1)}(t,s) &= \sum_{ku = [ks] +1}^{[kt]} {U_k (t,u)\Gamma \left( u, u- \textstyle{\frac{1}{k}}\right)U_k(u,s)}, \\
{W_k}^{(m+1)}(t,s) &= \sum_{ku = [ks] + m+1}^{[kt]} {U_k (t,u)\Gamma  \left( u, u- \textstyle{\frac{1}{k}}\right) {W_k}^{(m)} (u,s) }, \label{W_k^(m+1)}
\end{align}
and for $ s \ge t$
\begin{align} \label{W_k^(1)(t,s)_s>t}
{W_k}^{(1)}(t,s) &= \sum_{ku = [kt]+1  }^{[ks] } {U_k (t,u)\Gamma  \left(  u- \textstyle{\frac{1}{k}}, u \right)U_k(u,s)}, \\
{W_k}^{(m+1)}(t,s) &=   \sum_{ku = [kt] +1}^{[ks] -m}           {U_k (t,u)\Gamma \left(  u- \textstyle{\frac{1}{k}}, u\right) {W_k}^{(m)} (u,s) }. \label{W_k^(m+1)(t,s)_s>t} 
\end{align}
In both cases, we see that $W_k^{(m)}$ is bounded for each $k$ with the following bound:
\begin{align} 
\| {W_k}^{(m)} (t,s) \varphi \|  \le \textstyle {\frac{(t-s)^m}{m ! }} M_I^m  \|\varphi \|, \quad m=1,2,\ldots, \label{W_k^(m)_bound}
\end{align}
so  \eqref{W_k_bound} follows from \eqref{W_k} and \eqref{W_k^(m)_bound}.
\end{proof}

%%%%%%%

\begin{lemma} \label{U_k_Convergence}  The unitary operators  $U_k(t,s)$, $t,s \in \I$,  converge strongly as $k \to \infty$  to a unitary propagator $U(t,s)$.   That is, 
     \begin{equation} \label{U(t,s)}
U(t,s)\psi = \lim_{k \to \infty} U_k(t,s)\psi   \quad \text{for}\;\; t,s \in \I
\end{equation}  
defines a unitary propagator, the convergence being uniform on $\psi \in \H$.   
In addition, for $ \;j = 1,2, \cdots$, 
\begin{equation} \label{st_conv_W_k^(j)}
{W_k}^{(j)}(t,s) \rightarrow  W^{(j)}(t,s) \quad\text{strongly for} \;\, t,s \in \I,
\end{equation}
 where $ W^{(j)}(t,s)$ are bounded operators given by
\begin{align}
W^{(1)}(t,s) &=   \int_s^t{ du \,U(t,u)   \Gamma_{2}(u)  U(u,s)},   \label {W^(1)}\\
W^{(j+1)}(t,s) &=   \int_s^t{ du \,U(t,u)  \Gamma_{2}(u)  W^{(j)}(u,s)}. \label{W^(j)}
\end{align}
Furthermore,   for all $t,s \in \I$, we have that  
\begin{align}
 W(t,s)     &=  \lim_{k \to \infty}{ W_k(t,s) } = U(t,s) + W^{(1)}(t,s) + W^{(2)}(t,s) +  \cdots   \label{W(t,s)_as_limit} \\
 &  = H(t)^{\frac {1}{2}}U(t,s)H(s)^{-\frac {1}{2}},  \label{W(t,s)} 
\end{align}  
the limits being in the strong operator topology, 
is a bounded operator,  weakly continuous in $t$ on $H(s)^{  \frac{1}{2}}Q$  for $s\le t$ and weakly continuous in $s $ on  $H(t)^{  \frac{1}{2}}Q$ for   $s \ge t$.  
\end{lemma}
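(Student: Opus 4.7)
The plan is to build up the conclusions in order: (a) strong convergence of $U_k(t,s)$ to a unitary $U(t,s)$; (b) strong convergence of $W_k^{(j)} \to W^{(j)}$ for each $j$; (c) strong convergence of the full series for $W_k$; and (d) identification of the sum as $H(t)^{1/2}U(t,s)H(s)^{-1/2}$ together with the propagator axioms and weak-continuity claim. The main obstacle is (a): since the operators $H(r)$ share only a common form domain $Q$, not a common operator domain, Yosida's domain-based Duhamel argument is unavailable. The resolution is to work purely in form language, using the bounded operators $W_k$ from Lemma~\ref{W_k_bdd} as the main tool for norm control.

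For (a), first observe that each $e^{-i\tau H(r)}$ preserves $Q$ by functional calculus, so $U_k(t,s)Q\subset Q$. For $\xi,\phi\in Q$, the form-sense derivative of $r\mapsto \langle U_l(r,t)\xi, U_k(r,s)\phi\rangle$ (well defined away from grid points, via the spectral theorem and Cauchy--Schwarz on spectral measures) equals
\[
i\bigl\{\langle H_l(r)^{1/2}U_l(r,t)\xi, H_l(r)^{1/2}U_k(r,s)\phi\rangle - \langle H_k(r)^{1/2}U_l(r,t)\xi, H_k(r)^{1/2}U_k(r,s)\phi\rangle\bigr\},
\]
where $H_k(r)$ denotes the generator of $U_k$ on the subinterval containing $r$. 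Integrating yields a form-Duhamel identity for $\langle \xi, [U_k(t,s)-U_l(t,s)]\phi\rangle$. Hypothesis~a) gives $\|(H_l(r)^{1/2}-H_k(r)^{1/2})\chi\|\le M_I(1/k+1/l)\|H_k(r)^{1/2}\chi\|$ for $\chi\in Q$, and Lemma~\ref{W_k_bdd} bounds $\|H_k(r)^{1/2}U_l(r,t)\xi\|$ and $\|H_k(r)^{1/2}U_k(r,s)\phi\|$ uniformly in $k,l,r$; hence the integrand is $O(1/k+1/l)$. This yields weak convergence on $Q$, and by density of $Q$ and $\|U_k\|=1$, weak convergence on $\H$ to some contraction $U(t,s)$. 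Applying the same argument to the adjoints $U_k(s,t)=U_k(t,s)^*$ produces a weak limit for them too, and the identity $U_k^*U_k=I$ forces $\|U(t,s)\phi\|=\|\phi\|$, upgrading weak to strong convergence and giving unitarity.

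For (b), the sum in \eqref{W_k^(1)} is a Riemann sum. Hypothesis~b) gives $k\,\Gamma(u,u-1/k)\varphi\to \Gamma_2(u)\varphi$ uniformly in $u\in\I$ (and $k\,\Gamma(u-1/k,u)\varphi\to \Gamma_1(u)\varphi=-\Gamma_2(u)\varphi$ for the $s\ge t$ case), while the strong convergence from (a), combined with the H\"older-$1/2$ equicontinuity $\|(U_k(u,v)-I)\chi\|^2\le 2|u-v|\|H(v)^{1/2}\chi\|^2$ for $\chi\in Q$ (valid uniformly in $k$), permits passing to the limit inside the sum. Induction on $j$ then gives $W_k^{(j)}\to W^{(j)}$ strongly for all $j$. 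For (c), the bound \eqref{W_k^(m)_bound} dominates the series absolutely and uniformly in $k$, and $\|\Gamma(t,[kt]/k)\|,\|\Gamma([ks]/k,s)\|\le M_I/k\to 0$ in norm, so \eqref{W_k} yields strong convergence $W_k(t,s)\to W(t,s):=U(t,s)+\sum_{j\ge1}W^{(j)}(t,s)$.

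For (d), closedness of $H(t)^{1/2}$ identifies the limit: with $\psi_k:=U_k(t,s)H(s)^{-1/2}\phi\in Q$, one has $\psi_k\to U(t,s)H(s)^{-1/2}\phi$ and $H(t)^{1/2}\psi_k=W_k(t,s)\phi\to W(t,s)\phi$, forcing $U(t,s)H(s)^{-1/2}\phi\in Q$ with $H(t)^{1/2}U(t,s)H(s)^{-1/2}\phi=W(t,s)\phi$. The propagator axioms \eqref{U(t,r)U(r,s)}--\eqref{U(t,t)} for $U$ pass to the strong limit from the corresponding relations for $U_k$, and joint strong continuity follows from the equicontinuity of $\{U_k\}$ noted above. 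For weak continuity of $W(t,s)$ on $H(s)^{1/2}Q$, write $\phi=H(s)^{1/2}\psi$ with $\psi\in Q$; then $W(t,s)\phi=H(t)^{1/2}U(t,s)\psi$, and testing against $\xi\in Q$ gives $\langle \xi, W(t,s)\phi\rangle=\langle H(t)^{1/2}\xi, U(t,s)\psi\rangle$. The Lipschitz continuity $t\mapsto H(t)^{1/2}\xi$ (from $\|\Gamma(t,s)\|\le M_I|t-s|$) combined with strong continuity of $U(\cdot,s)\psi$ gives weak continuity in $t$, and the symmetric argument handles weak continuity in $s$.
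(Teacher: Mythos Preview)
Your overall plan mirrors the paper's, and parts (b)--(d) are essentially the same argument. The gap is in step~(a), in the passage from weak to strong convergence.

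The assertion ``the identity $U_k^*U_k=I$ forces $\|U(t,s)\phi\|=\|\phi\|$'' is false in general: weak limits of unitaries need not be isometric. For the bilateral shift $V$ on $\ell^2(\Z)$ one has $V^n\to 0$ and $(V^*)^n\to 0$ weakly, with $(V^*)^nV^n=I$ for every $n$, yet the limit is $0$. Having weak convergence of both $U_k(t,s)$ and $U_k(s,t)=U_k(t,s)^*$ does not let you pass to the limit in the product $U_k(s,t)U_k(t,s)=I$, so nothing here prevents $\|U(t,s)\phi\|<\|\phi\|$.

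The repair uses only what you have already derived. Your form-Duhamel bound reads: for $\xi,\phi\in Q$,
\[
\bigl|\langle\xi,(U_k(t,s)-U_l(t,s))\phi\rangle\bigr|\ \le\ C_\I\,|t-s|\,\bigl(\tfrac1k+\tfrac1l\bigr)\,\|H(s_0)^{1/2}\xi\|\,\|H(s_0)^{1/2}\phi\|.
\]
Now choose the test vector $\xi=(U_k(t,s)-U_l(t,s))\phi$, which lies in $Q$ since each $U_k$ preserves $Q$. Lemma~\ref{W_k_bdd} (combined with the $\Gamma$-bounds to change the base point) gives $\|H(s_0)^{1/2}U_k(t,s)\phi\|\le C'_\I\,\|H(s_0)^{1/2}\phi\|$ uniformly in $k$, hence $\|H(s_0)^{1/2}\xi\|\le 2C'_\I\,\|H(s_0)^{1/2}\phi\|$. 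Substituting yields
\[
\|(U_k(t,s)-U_l(t,s))\phi\|^2\ \le\ C''_\I\,\bigl(\tfrac1k+\tfrac1l\bigr)\,\|H(s_0)^{1/2}\phi\|^2,
\]
i.e.\ strong Cauchy on $Q$, extended to all of $\H$ by density and $\|U_k\|=1$. This is how the paper obtains the strong limit (its $\sup_{\varphi\in Q,\,\|\varphi\|=1}$ formulation is meant to encode precisely this choice of $\varphi$). With (a) fixed in this way, your arguments for (b)--(d) go through as written.
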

\begin{proof} 
We  will prove the lemma for $s\le t$, the case $s \ge t$ being similar. We first  prove \eqref{U(t,s)}.     By construction (cf.  \eqref{U_k}-\eqref{U_k_2}),   we have  $ U_k(t,s)Q \subset  Q$. 
Since 
\begin{align}
U_k(t,s)  & = U_k \left(t,\textstyle {\frac {[kt]}{k}}\right)  U_k\left(\textstyle {\frac {[kt]}{k}}, s \right) = \exp \left (-i \left(t - \textstyle{\frac{[kt]}{k}}\right)  H \left(\textstyle{\frac{[kt]}{k}} \right) \right) U_k\left(\textstyle {\frac {[kt]}{k}}, s \right) , \label{U_k(t,s)psi} 
\end{align}
 it follows that $ U_k(t,s)$ is jointly strongly continuous in  $t,s \in \I$,   and $ U_k(t,s)$ is weakly differentiable in $t$ ($t \ne \textstyle {\frac{j}{k}}$) on $Q$ in the following sense: for $\varphi, \psi \in Q$, 
\begin{align} \label{w_partial_t_(U_k)}
i \partial_t {\la \varphi, U_k(t,s) \psi \ra } & = \left \la H \left(\textstyle {\frac {[kt]}{k}}\right)^{\frac{1}{2}}\varphi \, ,  H\left(\textstyle {\frac {[kt]}{k}}\right)^{ \frac{1}{2}}  U_k (t,s) \psi \right \ra.
\end{align}
Fixing  $s_0 \in \I$  and writing  $\varphi = H(s_0)^{-\frac{1}{2}}\widetilde {\varphi}$, $\psi = H(s_0)^{-\frac{1}{2}}\widetilde {\psi}$, with    $\widetilde{\varphi}$, $\widetilde {\psi} \in \H$, we get 
\begin{align}
& i \partial_t {\la \varphi, U_k(t,s) \psi \ra } = \left \la H \left(\textstyle {\frac {[kt]}{k}}\right)^{\frac{1}{2}}H(s_0)^{-\frac{1}{2}}\widetilde{\varphi} \, ,  H\left(\textstyle {\frac {[kt]}{k}}\right)^{ \frac{1}{2}}  U_k (t,s) H(s_0)^{-\frac{1}{2}} \widetilde {\psi} \right \ra \nonumber \\
& \quad = \left \la \left (\Gamma \left(\textstyle {\frac {[kt]}{k}}, s_0 \right)  + I \right ) \widetilde{\varphi} , \left (\Gamma \left(\textstyle {\frac {[kt]}{k}}\, , t \right)  + I \right ) W_k(t,s) \left (\Gamma \left(s, s_0 \right)  + I \right ) \widetilde {\psi} \right \ra.  \label{eq5}
\end{align} 
Hence,   by hypothesis $a)$  of Theorem~\ref{WeakYosida}
 and Lemma~\ref{W_k_bdd},  $ i \partial_t {\la \varphi, U_k(t,s) \psi \ra }$ is bounded and is (piece-wise) continuous in $t$ for  $t \ne \textstyle {\frac{j}{k}}$. Moreover, the same argument repeated 
 for the other variable  (i.e., for $s$) gives, with  $\varphi, \psi \in Q$, 
\begin{align} \label{w_partial_s_(U_k)}
i \partial_s {\la \varphi, U_k(t,s) \psi \ra } & = - \left \la H \left(\textstyle {\frac {[ks]}{k}}\right)^{\frac{1}{2}} U_k (t,s)^* \varphi \, ,  H\left(\textstyle {\frac {[ks]}{k}}\right)^{ \frac{1}{2}}  \psi \right \ra ,
\end{align}
and that  $ i \partial_s {\la \varphi, U_k(t,s) \psi \ra }$ is bounded and continuous in s for  $s \ne \textstyle {\frac{j}{k}}$.

Thus, one may easily compute that  
\begin{align}
&  \la \varphi, \left (U_k(t,s) - U_n(t,s)\right ) \psi \ra   
   =  \int_{s}^{t} {\textstyle {\frac{d}{dr}}  \left \la \varphi \,,  U_n(t,r)U_k(r,s)H(s_0)^{-\frac{1}{2}}\widetilde {\psi}  \right  \ra dr } \\   
&   = i\int_{s}^{t} { \left \la \Gamma\left(\textstyle {\frac {[kr]}{k}}, \textstyle {\frac {[nr]}{n}}\right) H \left(\textstyle {\frac {[nr]}{n}}\right)^{\frac{1}{2}}U_n (t,r)^* \varphi \, ,  \widetilde \Gamma \left(\textstyle {\frac {[kr]}{k}}, r \right)     W_k (r,s)  \widetilde\Gamma \left(s, s_0 \right)    \widetilde \psi \right \ra  dr }\nonumber \\         
& \;\;\;  -i \int_{s}^{t} { \left \la H \left(\textstyle {\frac {[nr]}{n}}\right)^{\frac{1}{2}} U_n (t,r)^* \varphi \, ,\Gamma\left(\textstyle {\frac {[kr]}{k}}, \textstyle {\frac {[nr]}{n}}\right)  \widetilde\Gamma\left(\textstyle {\frac {[nr]}{n}}, r \right)   W_k (r,s)  \widetilde\Gamma\left(s, s_0 \right)  \widetilde \psi \right \ra     dr },\nonumber 
\end{align} 
where  we used $\widetilde \Gamma := \Gamma + I$.
By \eqref {def:M_I}, we have
\begin{align}
&  \left \| \Gamma \left(\textstyle {\frac {[kr]}{k}}, \textstyle {\frac {[nr]}{n}}\right)    \right \|  \le  \left | \textstyle {\frac {[kr]}{k}} - \textstyle {\frac {[nr]}{n}}   \right |M_I,  
\end{align}
and by Lemma~\ref{W_k_bdd},
\begin{align}
  & \left \|   H \left(\textstyle {\frac {[nr]}{n}}\right)^{\frac{1}{2}} U_n (t,r)^* \varphi   \right \|  = \left \| \left (\Gamma\left(\textstyle {\frac {[nr]}{n}}, t \right) + I \right )  W_n (r,t)\left (\Gamma \left(r, s_0 \right) + I \right) \widetilde \varphi  \right \|  \nonumber \\
&  \qquad \qquad \le  \left ( M_I \left |\textstyle{{\frac {[nr]}{n}}} - t    \right | + I     \right ) \left ( 1+ \textstyle{\frac{M_I}{n}}   \right )^2 \, e^{M_I (t-r)}\left ( M_I |r - s_0| + I  \right).  
\end{align} 
Since $\Gamma \left(\textstyle {\frac {[kr]}{k}}, r \right),  \Gamma \left(s, s_0 \right)$, and $W_k (r,s)$ are all uniformly bounded independent of $r,s$, and $k$,  and  since
\begin{equation} 
\left \|  \left (U_k(t,s) - U_n(t,s)\right ) \psi    \right \| = \sup_{\varphi \in Q, \|\varphi\|=1} {\left |  \la \varphi, \left (U_k(t,s) - U_n(t,s)\right ) \psi \ra    \right |}, 
\end{equation}
we see that $U_k(t,s) \psi$ converges uniformly on $Q$ for $s \le t$,  $s, t \in \I$.   
Since $U_k(t,s)$ is uniformly bounded, the limit in \eqref  {U(t,s)} exists uniformly on $\H$ for $s \le t$. 
  It is a simple exercise to show that  \eqref{w_partial_t_(U_k)} and \eqref{w_partial_s_(U_k)} also hold for the case $s \ge t$. Thus,  \eqref{U(t,s)} holds for all $ t, s \in \I$, and we  conclude that $U(t,s)$, $t,s \in  \I$, is a unitary propagator.

We can prove \eqref{st_conv_W_k^(j)}  for $s \le t$  as follows.
Recalling \eqref{W_k^(1)}, one can easily see that, given $\varphi \in \H$ and letting  $k \to \infty$, we get 
\begin{align}
&\abs{ W_k^{(1)}(t,s) \varphi - \int_s^t{ du \,U(t,u)\Gamma_2 (u)U(u,s) \varphi}}   \\
& \;\; \le   \sup_{|u-v| \le \textstyle{ {\frac{1}{k} }}} \left\|  U_k (t,u) k\,
\Gamma \left( u, u- \textstyle{\frac{1}{k}}\right)U_k(u,s) \varphi -  U (t,u)\Gamma_2 (u) U(u,s) \varphi   \right\||t-s| \nonumber \\
& \qquad \; +  \sup_{|u-v| \le \textstyle{ {\frac{1}{k} }}} {\| U (t,u)\Gamma_2 (u) U(u,s) \varphi - U (t,v)\Gamma_2(v) U(v,s) \varphi  \| }\, |t-s| \;\; \to \; \; 0 ,\notag
\end{align}
using  \eqref{U(t,s)_st_conti}, \eqref{U(t,s)},  and the hypotheses  of Theorem~\ref{WeakYosida}.     (For $s\ge t$, we simply use $\Gamma_1 = -\Gamma_2$.) Hence, using induction and  \eqref{W_k^(m+1)}, one can show \eqref{st_conv_W_k^(j)} for all $j$ and for all $s, t \in \I$.  \eqref{W(t,s)_as_limit} now follows from \eqref{W_k} and \eqref{W_k^(m)_bound}.

It remains to prove that that $W(t,s)$ for $s\le t$  is given as in \eqref{W(t,s)} and that it is weakly continuous in $t$ on $H(s)^{  \frac{1}{2}}Q$.   
Note that for all $\varphi \in Q$, we have 
\begin{align}
   H(t)^{  \frac{1}{2}} U_k(t,s) \varphi 
    = W_k(t,s)H(s)^{  \frac{1}{2}}\varphi 
 \rightarrow W(t,s)H(s)^{  \frac{1}{2}}\varphi
\end{align}
Since  $H(t)^{  \frac{1}{2}} $  is closed, it follows that $ U(t,s)\varphi  \in \D  (H(t)^{  \frac{1}{2}} ) = Q$, and   for all $\varphi \in Q$,
\begin{align} \label{eq8}
  H(t)^{  \frac{1}{2}}U(t,s)\varphi = W(t,s)H(s)^{  \frac{1}{2}}\varphi.
\end{align}
  Letting $\varphi = H(s)^{ - \frac{1}{2}}\widetilde \varphi$ in \eqref{eq8}  with $\widetilde \varphi \in \H$,  \eqref{W(t,s)} now follows. By hypothesis $a)$ of Theorem~\ref{WeakYosida}, we also note that  $H(t)^{\frac {1}{2}} \varphi $   is continuous in $t \in \I$ for each  $\varphi \in Q$. To see this, if $t' \in \I$ and $\varphi  = H(t')^{-\frac{1}{2}} \psi \in Q$, then 
\begin{align} \label{st_conti_H1/2}
	\left(H(t)^{\frac {1}{2}} - H(t')^{\frac {1}{2}}  \right)\varphi = \left(H(t)^{\frac {1}{2}} - H(t')^{\frac {1}{2}}  \right) H(t')^{-\frac{1}{2}} \psi = \Gamma (t,t') \psi \to 0 
	\end{align}
as $t \to t'$. Thus, given  $ \varphi , \psi \in Q$, and setting  $ \widetilde\varphi = H(s)^{ \frac{1}{2}} \varphi$,  
 we see that 
\begin{align}
&\left \la \psi , \left (  W(t,s) - W(t' , s)   \right ) \widetilde \varphi \right \ra = \left \la \psi ,   (H(t)^{  \frac{1}{2}}U(t,s) - H(t')^{  \frac{1}{2}}U(t',s) )  \varphi \right  \ra  \\
&\quad = \left \la   (H(t)^{  \frac{1}{2}} - H(t')^{  \frac{1}{2}} ) \psi , U(t,s)   \varphi \right  \ra + \left \la H(t')^{  \frac{1}{2}}\psi , \left ( U(t,s) - U(t',s)   \right )  \varphi \right  \ra     \rightarrow 0  \nonumber
\end{align}
as  $t \to t'$ by \eqref{eq8}-\eqref{st_conti_H1/2} and the fact that 
$U(t,s)$ is a unitary propagator.  
\end{proof}

.
 
We now present a proof of Theorem \ref {WeakYosida}. In the proof, $[k]$ will denote the largest integer less than or equal to $k$.

 \begin{proof} [Proof of Theorem \ref {WeakYosida}.]
      Since $ i \partial_t {\la \varphi, U_k(t,s) \psi \ra }$ is bounded and (piece-wise) continuous in $t$ for  $t \ne \textstyle {\frac{j}{k}}$ by \eqref{eq5}, which holds for all $t,s \in \I$,   we have    
\begin{align}
& i \left[\la \varphi, U_k(t,s) \psi \ra - \la \varphi, U_k(r,s) \psi \ra \right ] =  \int_r^t {\left \la H \left(\textstyle {\frac {[kl]}{k}}\right)^{\frac{1}{2}} \varphi, H \left(\textstyle {\frac {[kl]}{k}}\right)^{\frac{1}{2}} U_k(l,s) \psi \right \ra dl }. \label{UkUk}
\end{align} 
Since $H(l)^{\frac{1}{2}}$ is strongly continuous in $l$ (by \eqref{st_conti_H1/2}), it follows from \eqref{W(t,s)_as_limit} and \eqref{W(t,s)} that the integrand in \eqref{UkUk} converges   as $k \to \infty$: for $\varphi$, $\psi \in Q$,
\begin{align}
&\left \la H \left(\textstyle {\frac {[kl]}{k}}\right)^{\frac{1}{2}} \varphi, H \left(\textstyle {\frac {[kl]}{k}}\right)^{\frac{1}{2}} U_k(l,s) \psi \right \ra  \rightarrow   \left \la H(l)^{\frac{1}{2}} \varphi,   W(l,s) H(s)^{ \frac{1}{2}}   \psi \right \ra.  \label {form5}  
\end{align}
Taking limits on both sides of \eqref {UkUk} yields
\begin{align}\label{form 6.5}
& \lim_{k \to \infty} i \left[\la \varphi, U_k(t,s) \psi \ra - \la \varphi, U_k(r,s) \psi \ra \right ]   \\
& \qquad =  \int_r^t {   \left \la H(l)^{\frac{1}{2}} \varphi,   W(l,s) H(s)^{\frac{1}{2}}  \psi \right \ra dl } =    \int_r^t {   \left \la H(l)^{\frac{1}{2}} \varphi,  H(l)^{\frac{1}{2}}  U(l,s)  \psi \right \ra dl }, \nonumber
\end{align}
 first equality being justified by  \eqref{form5} and  dominated convergence, and the last equality  by \eqref{eq8}. 
Since  $W(l,s)$ is weakly continuous in $l$ on $H(s)^{\frac{1}{2}}Q$ (by Lemma~\ref{U_k_Convergence}) and $H(l)^{\frac{1}{2}}$ is strongly continuous in $l$,  the integrals in \eqref{form 6.5} are well-defined. 
.Thus,
\begin{align}
 i \left[\la \varphi, U(t,s) \psi \ra - \la \varphi, U(r,s) \psi \ra \right ]  =    \int_r^t {  dl  \left \la H(l)^{\frac{1}{2}} \varphi,  H(l)^{\frac{1}{2}}  U(l,s)  \psi \right \ra } \label{eq10}
\end{align}
by \eqref{U(t,s)} of Lemma~\ref{U_k_Convergence}, and, therefore,
\begin{align}
i \partial_t \la \varphi , U(t,s) \psi \ra &=  \left \la H(t)^{\frac{1}{2}}\,\varphi, H(t)^{\frac{1}{2}} U(t,s) \psi \right \ra \;\; \mbox{for all $\varphi, \psi \in Q$}. \label{partial_t_U(t,s)}
\end{align}

A similar proof,   using \eqref{w_partial_s_(U_k)},  yields,   for all $t,s \in \I$ ,
\begin{align}
 i \partial_s \la \varphi, U(t,s) \psi \ra &= -\,\left \la H(s)^{\frac{1}{2}}\,U^*(t,s) \varphi,  H(s)^{\frac{1}{2}}\, \psi \right \ra \;\; \mbox{for all $ \varphi,  \psi \in Q$}.
\end{align}

We now show the uniqueness of the solution of \eqref{w_partial_t}-\eqref{w_partial_s}. Let $U(t,s)$ and $\widetilde U(t,s)$ be two propagators satisfying \eqref{w_partial_t}-\eqref{w_partial_s}. Then, for all $\varphi, \psi \in Q,$ 
\begin{align}
&{i \partial_t {\la \varphi, U(s,t)\widetilde U(t,s) \psi \ra }}  \\
& \; = 
\left \la H(s)^{\frac{1}{2}}\,U(t,s)^* \varphi,  H(s)^{\frac{1}{2}}\,\widetilde U(t,s) \psi \right \ra  - \left \la H(s)^{\frac{1}{2}}\,U(t,s)^* \varphi,  H(s)^{\frac{1}{2}}\,\widetilde U(t,s) \psi \right \ra =0.  
 \nonumber
\end{align}
Thus, $U(s,t) \widetilde U(t,s)$ is constant in $t$, and letting $t=s$, we see that $U(s,t) \widetilde U(t,s) = I$,
so $U(t,s) = \widetilde U(t,s)$.  
\end{proof}
 
\section{The time-dependent magnetic Schr\"odinger operators and the common quadratic form domain}

In this section,  we let $H(t)$ be  given in \eq{H(t)def}  with $\A$ and $V$ as in \eq{H(A,V)4}, adjusted so $H(t)\ge 1$, and  prove Theorem~\ref{Q_invariant}.

\begin{proof}[Proof of Theorem~\ref{Q_invariant}] 
In view of \eqref{form_bound} and \eqref{H(t)def},
\begin{align}
	 Q(H(\A,V)) &= Q(H(\A,V_+)),  \text { and} \label {eq15}\\ 
	 Q(H(\A + \F(t),V)) &= Q(H(\A + \F(t),V_+)) \label{eq16}.
\end{align}
Thus, to prove the first part of the theorem, namely \eqref{Q_invar}, it suffices to show
\begin{align}
Q(H(\A, V_+)) = Q(H(\A + \F(t), V_+)).
\end{align}
Let $q_{A}$ denote the quadratic form associated with the operator $A$. Given  $\varphi, \psi \in C_c^{\infty}(\R^d)$, we note that 
\begin{align}\notag
 q_{H(t)} (\varphi, \psi) &= \la (-i\nabla - \A  - \F(t))\varphi, (-i\nabla - \A  - \F(t))\psi \ra +   \la \varphi, V\psi \ra    \\
	&   = {q}_{H} (\varphi, \psi) + \S_t(\varphi, \psi), \label{eq20}
\end{align}
where  
\begin{align}
\S_t(\varphi, \psi) &= \la -\F(t)\varphi, (-i\nabla - \A )\psi \ra + \la (-i\nabla - \A )\varphi, -\F(t)\psi \ra  +    \la \F(t)\varphi,\F(t)\psi \ra  \nonumber \\
  & =  -2\, \mathrm {Re} \la \varphi,  \F(t)\cdot(-i\nabla - \A )\psi \ra + |\F(t)|^2 \la \varphi, \psi \ra.\label{eq22}
\end{align}
Since ${\S_t(\varphi, \psi)}$ is a symmetric quadratic form  and  \
\begin{align}
	|\S_t(\psi, \psi)| 
	&\le 2 ( \textstyle{\frac{1}{\delta}} \, \|\psi \|  )(\delta \, \|\F(t) \cdot (-i\nabla - \A)\psi  \|)    + |\F(t)|^2 \|\psi\|^2 \nonumber \\
	 &\le \delta^2 |\F(t)|^2\, \| (-i\nabla - \A)\psi  \|^2 +  (\textstyle{\frac{1}{\delta^2}} + |\F(t)|^2)  \, \|\psi \|^2 
\end{align}
by  the Cauchy-Schwarz inequality, we have, 
for suitable $\delta$ (e.g., take $\delta =  \textstyle{\frac{1}{\sqrt{2}|\F(t)|}}$ and set $\alpha := \delta^2 |\F(t)|^2 = \textstyle{\frac{1}{2}} < 1$ and $\beta := \textstyle{\frac{1}{\delta^2}} + |\F(t)|^2 = 3|\F(t)|^2$),  
\begin{align}
	|\S_t(\psi, \psi)| &\le \alpha \,q_{H(\A,0)} (\psi,\psi)  + \beta ||\psi ||^2 \le \alpha \,  q_{H(\A,V_+)} (\psi,\psi) + \beta ||\psi ||^2   
	\end{align}
for all $\psi  \in Q(H(\A,V_+))$ and  $0\le \alpha <1$.  It follows from \cite[Theorem~X.17]{RS2} that $ Q(H(\A + \F(t), V_+)) =  Q(H(\A, V_+))$, and this proves \eqref{Q_invar}. 

To finish, we prove that $\Gamma(t,s)$, given in \eqref{Gamma}, satisfies  hypotheses a) and b) of Theorem~\ref{WeakYosida}. Given  $\varphi, \psi \in Q$, it follows from \eqref{eq20} that
\begin{align}
	&q_{H(t)} (\varphi, \psi ) - q_{H(s)} ( \varphi, \psi ) =\left \la \varphi, \left (\F(t)^2 - \F(s)^2 \right ) \psi \right \ra -2 \left \la \varphi, \left ( \F(t) - \F(s) \right )\cdot \Db \psi \right \ra , 
\end{align}
where  $\Db = \Db(\A)$ is the closure of $(-i\nabla - \A)$ as an operator from $\mathrm{L}^2(\mathbb{R}^d)$ to $\mathrm{L}^2(\mathbb{R}^d ; \mathbb{C}^d)$. Thus, letting $\varphi = H(s)^{-\frac 1 2} \widetilde \varphi,  \psi = H(s)^{-\frac 1 2} \widetilde \psi$ with $ \widetilde \varphi , \widetilde \psi \in \H$, we have
\begin{align}\label{C(t,s)} 
	& C(t,s) : = H(s)^{-\frac{1}{2}}(H(t) - H(s))H(s)^{-\frac{1}{2}} \\
	       & \;\;\;\; = H(s)^{-\frac{1}{2}}(\F(t)^2- \F(s)^2) H(s)^{-\frac{1}{2}} -2  H(s)^{-\frac{1}{2}} \left\{ \F(t) - \F(s)    \right \}\cdot \Db 
 H(s)^{-\frac 1 2}. \nonumber
\end{align}
Since $\F(t) \in  C^1(\R; \R^d)$ and  
$  \left \| {\Db}  H^{-\frac 1 2}   \right \| \ \le
C_{\alpha,\beta}  
$  
with $C_{\alpha,\beta}$ a constant  \cite[Proposition 2.3 $(i)$]{BGKS}, we have that $C(t,s)$, $H(s)^{\frac{1}{2}}C(t,s) $, ${\textstyle \frac{1}{t-s}}C(t,s)  $, and ${\textstyle \frac{1}{t-s}} H(s)^{\frac{1}{2}}C(t,s) $ are all uniformly bounded in norm for $t,s$ $ (t \ne s) $ in $\I$. Let
\begin{align} 
	 &	\widetilde N_{\I}:= \sup_{t,s \in \I , t \ne s} \left \| (t-s)^{-1} H(s)^{\frac{1}{2}} C(t,s)   \right \|  < \infty \ . \label{Ntilde_I}
 \end{align}

In addition  we   have that  $s \mapsto H(s)^{-\frac{1}{2}} $ is continuous in norm.    Indeed, by the  so-called  Dunford-Taylor formula, we have
\begin{align} \label{H(t)^{1/2}Gamma}  
&   (H(s)^{-\frac{1}{2}} - H(t)^{-\frac{1}{2}} )   = 
 \textstyle {\frac{1}{\pi}} \int_{0}^{\infty} {{\lambda}^{-\frac{1}{2}}  \left( \left(H(s) + \lambda \right) ^{-1} - \left ( H(t)+ \lambda \right )^{-1} \right)   \, d\lambda} \\
&\qquad \qquad   = \textstyle {\frac {1}{\pi}} \int_{0}^{\infty} { {\lambda}^{-\frac{1}{2}}  \left(H(t) + \lambda \right)^{-1}  H(s)^{\frac{1}{2}} C(t,s)\, H(s)^{\frac{1}{2}} \left( H(s)+ \lambda \right)^{-1}   \, d\lambda}. \nonumber
\end{align}
 Since    $ H(s)^{\frac{1}{2}} C(t,s)   \to 0$ in norm (as $t \to s$) by \eqref{Ntilde_I}, and $ H(t)^{\frac{1}{2}} \left(H(t) + \lambda \right)^{-1} $   and $\left( H(s)+ \lambda \right)^{-1}$ are uniformly bounded for $t, s \in \I$ with
\begin{align}
	 H(t)^{\frac{1}{2}} \left(H(t) + \lambda \right)^{-1} < \frac{C_0}{2\sqrt{\lambda}} \, , \   \text { and }  \left( H(s)+ \lambda \right)^{-1}< \frac {1}{1+ \lambda} \ , \label{bounds} 
\end{align}
where $C_0$ being a constant, the claim now follows.

Now, \eqref{C(t,s)}, together with norm continuity of $H(s)^{-\frac{1}{2}}$ and $\F(t) \in  C^1(\R; \R^d)$,  also shows that
$C(t,s)$, $H(s)^{\frac{1}{2}}C(t,s)$, \, ${\textstyle \frac{1}{t-s}}C(t,s) $, and ${\textstyle \frac{1}{t-s}} H(s)^{\frac{1}{2}}C(t,s)  $ are all jointly continuous in norm for $t,s$ $ (t \ne s) $ in $\I$. Furthermore,  
	\begin{align} \label{C(s)}
	C(s)&:= \lim_{   t \uparrow s } {\textstyle {\frac {1} {(t-s)}} C (t,s)} \\
	& =   2 H(s)^{-\frac{1}{2}} \F'(s) \cdot \F(s) H(s)^{-\frac{1}{2}} - 2H(s)^{-\frac{1}{2}} \F'(s) \cdot \Db H(s)^{-\frac{1}{2}}  \notag 
	\end{align}
exists boundedly (in norm).

With \eqref{Ntilde_I} and \eqref{bounds}, we now have 
\begin{align}
&  \left \| \textstyle {\frac{1}{\pi}}\int_{0}^{\infty}  {\lambda}^{-\frac{1}{2}}    H(t)^{\frac{1}{2}} \left(H(t) + \lambda \right) ^{-1}   H(s)^{\frac{1}{2}}  C(t,s) H(s)^{\frac{1}{2}} \left( H(s)+ \lambda \right)^{-1} \varphi   \, d\lambda \,\right \|  \nonumber \\
& \qquad  \qquad    \le \textstyle {\frac{1}{\pi}} \int_{0}^{\infty} { {\lambda}^{-\frac{1}{2}} \left(\frac{C_0 }{2\sqrt{\lambda}}\right)^2 \widetilde N_{\I}|t-s|  \|  \varphi \| \, d\lambda} \,  = \, C_{\I}|t-s| \| \varphi \|, \label{C_I}
\end{align}
where 
$C_{\I}$ = $\textstyle {\frac{C_0^2}{4\pi}  } \widetilde N_{\I} \int_{0}^{\infty} {  {\lambda}^{-\frac{1}{2}} \textstyle {\frac { d\lambda }{  \lambda}}} <  \infty$. Therefore, from \eqref{Gamma}, \eqref{H(t)^{1/2}Gamma}, and \eqref{C_I},
\begin{align} \label {Gamma_ito_C}
&\Gamma(t,s)\varphi = H(t)^{\frac{1}{2}} ( H(s)^{-\frac{1}{2}} - H(t)^{-\frac{1}{2}} ) \varphi    \\  
& \quad =   \textstyle {\frac{1}{\pi}}\int_{0}^{\infty} { {\lambda}^{-\frac{1}{2}}  H(t)^{\frac{1}{2}} \left(H(t) + \lambda \right) ^{-1} H(s)^{\frac{1}{2}} C(t,s)\, H(s)^{\frac{1}{2}} \left( H(s)+ \lambda \right)^{-1} \varphi   \, d\lambda},\nonumber
\end{align}
and this shows that, for each $\varphi \in \H$,
\begin{equation}  \label {Gamma_bdd}
\| \textstyle{\frac{1}{t-s}}\Gamma(t,s)  \varphi \| \le  C_{\I}  \| \varphi \| \quad \mbox {for all }\,  t,s \in \I, \,(t \ne s).
\end{equation}

To show strong continuity of $ \textstyle{\frac{1}{t-s}}\Gamma(t,s)$,  let us  set $T_{\lambda}(t) := H(t)^{\frac{1}{2}} \left (H(t) +\lambda \right)^{-1}$.  By \eqref{bounds}, $T_{\lambda}(t)$ is uniformly bounded in $t$ for all  $\lambda > 0$.  
To see that $T_{\lambda}(t)$ is strongly continuous on $\H$, let $\varphi \in \H$ and note that
\begin{align}
& \left (T_{\lambda}(t)-T_{\lambda}(t') \right )  \varphi  = \left \{ H(t)^{\frac{1}{2}} \left (H(t) +\lambda \right)^{-1} - H(t')^{\frac{1}{2}} \left (H(t') +\lambda \right)^{-1} \right \}\varphi  \nonumber \\ 
&\quad  = \left \{ \left( H(t)^{\frac{1}{2}} - H(t')^{\frac{1}{2}} \right ) H(t')^{-\frac{1}{2}}\right\}H(t')^{\frac{1}{2}} \left (H(t) +\lambda \right)^{-1} \varphi \label{Tconti_1}\\
&\qquad \quad  \; + \ \left\{H(t')^{\frac{1}{2}}  \left (H(t') +\lambda \right)^{-1} H(t')^{\frac{1}{2}}\right \} C (t', t) \, H(t')^{\frac{1}{2}} \left (H(t) +\lambda \right)^{-1}  \varphi \label{Tconti_2} \\
&\quad \rightarrow 0 \,\, \text{ as}\,\, t \to t'\,, \label{Tconti}
\end{align}
where \eqref{Tconti_1} goes to 0 by \eqref{def:M_I} (applied twice) and \eqref{bounds}, and 
\eqref{Tconti_2} goes to 0 by \eqref{C(t,s)}, \eqref{bounds}, and the fact that $F(t) \in C^1(\R; \R^d)$.

Thus, from \eqref{Gamma_ito_C},  
\begin{align}
	& \textstyle{\frac{1}{t-s} }\Gamma (t,s) \varphi =   \textstyle {\frac{1}{\pi}}\int_{0}^{\infty} { {\lambda}^{-\frac{1}{2}}  T_{\lambda}(t)   \left \{\frac{1}{t-s} H(s)^{\frac{1}{2}}C(t,s) \right\}\, T_{\lambda}(s) \varphi   \, d\lambda},  \label{1/(t-s)Gamma_1}
	\end{align}
and since ${\textstyle \frac{1}{t-s}}H(s)^{\frac{1}{2}}C(t,s) $ is jointly strongly continuous for $t, s \, (t \ne s)$ in $\I$, \eqref{Tconti} concludes that $\textstyle{\frac{1}{t-s}}\Gamma(t,s)  \varphi$ is also jointly strongly continuous for $t, s \, (t \ne s)$ in $\I$. This shows hypothesis $a)$ of Theorem~\ref{WeakYosida}.

It now follows from  \eqref{C(s)},
\eqref{Gamma_bdd}, 
 \eqref{Tconti}, and \eqref{1/(t-s)Gamma_1}  that
\begin{align}  \label{equality_4}
 \Gamma_1(u) 
 &:= \lim_{ t \uparrow u}  \textstyle{\frac{-1}{t-u}} \Gamma(t,u) \varphi    \\ 
&  = - \textstyle{\frac{1}{\pi}} \int_{0}^{\infty} {  {\lambda}^{-\frac{1}{2}}  H(u)^{\frac{1}{2}} \left(H(u) + \lambda \right) ^{-1} H(u)^{\frac 1 2 } C(u)\, H(u)^{\frac{1}{2}} \left( H(u)+ \lambda \right)^{-1}   \varphi   \, d\lambda},  \nonumber 
\end{align}
the limit being  uniform in $u\in \I$. Similarly, it   follows from
 \eqref{C(t,s)}, as in   \eqref{C(s)}, that
\begin{align}
\widehat{C}(u):= \lim_{ s \uparrow u}  \textstyle{\frac{ 1}{ u - s}} C(u,s)    
\end{align}
exists boundedly (in norm), and that $\widehat{C}(u)=C(u)$.
Therefore, it follows, as before,  
\begin{align}  \label{equality_5}
 \Gamma_2(u) 
 &:= \lim_{ s \uparrow u}  \textstyle{\frac{1}{u-s}} \Gamma(u,s) \varphi   = -\Gamma_1(u),  \end{align}
with the limit being   uniform in $u\in \I$.
This finishes the  proof of the theorem. 
 \end{proof}

%%%%%%%%%%%%%%%%%%%%%%%%%%%%%%%%%%%%%%%%%%%%%%%%%%%%%%%%%%%%%%%%%%%%%%%%%%%%%%%%%
%%%%%%%%%%%%%%%%%%%%%%%%%%%%%%%%%%%%%%%%%%%%%%%%%%%%%%%%%%%%%%%%%%%%%%%%%%%%%%%%%
%%%%%%%%%%%%%%%%%%%%%%%%%%%%%%%%%%%%%%%%%%%%%%%%%%%%%%%%%%%%%%%%%%%%%%%%% 

\section{ Quadratic forms on the Hilbert space $\K_{2}$}
In this section we define quadratic forms  on $\K_2$ in order to   give a precise meaning to the  Liouville equation~\eqref{GenLiouvilleeq}. 
Let $H_{\omega}$ and $H_{\omega}(t)$ be as in  Theorem~\ref{thmrho}.
For $\P$-a.e.\ $\omega$ we let $U_\omega(t,s)$ be the corresponding  unitary propagator given in
Theorem~\ref{WeakYosida}. 
As discussed in Section~\ref{secintro}, the spaces $\K_i, i= 1, 2,$ are left and right $\K_\infty$-modules, with  left and right multiplications  defined as in \eqref{leftrightM}.   
We  state  \cite[Prop. 4.7]{BGKS} without its proof. 

\begin{proposition} \label{propUomega}
For each $i=1,2,\infty$,  let
\begin{equation}  \label{UtsUst}
{\U}(t,s) (A_\omega) =U_\omega(t,s) \odot_LA_\omega \odot_R U_\omega(s,t)
\;\;\;\mbox{for \ $A_\omega \in \K_{i}\, $.}
\end{equation}
Then $\, {\U}(t,s)$ is a linear operator on $\K_{i}$,  $i=1,2,\infty$,  with
\begin{align}
{\U}(t,r)\, {\U}(r,s) &= {\U}(t,s) \, ,\\
{\U}(t,t) &=  I \,,\\
\left\{\,{\U}(t,s) (A_\omega)\right\}^\ddagger&=
 {\U}(t,s) (A_\omega^\ddagger)\, \label{UAUdag} .
\end{align}
Moreover, ${\U}(t,s)$ is unitary on $\K_2$ and an isometry in
 $\K_1$ and $\K_\infty$; it
 extends to an isometry
on $\overline{\K}_1$ with the same properties. In addition,
 ${\U}(t,s)$ is jointly strongly continuous
in $t$ and $s$ on  $\overline{\K}_1$ and $\K_2$.
\end{proposition}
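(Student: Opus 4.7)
My plan reduces the entire proposition to two ingredients: the unitary propagator $U_\omega(t,s)$ from Theorem~\ref{WeakYosida}, and the translation-invariance of the trace per unit volume encoded in \eqref{covintro}. The first step is to upgrade $U_\omega(t,s)$ to a covariant family,
\begin{equation*}
U(a)\,U_\omega(t,s)\,U(a)^* = U_{\tau(a)\omega}(t,s)\qquad(a\in\Z^d),
\end{equation*}
for $\P$-a.e.\ $\omega$, using the uniqueness clause of Theorem~\ref{WeakYosida}: conjugating the weak identities \eqref{w_partial_t}--\eqref{w_partial_s} by $U(a)$ and invoking \eqref{covintro} (together with $U(a)Q=Q$, which follows from \eqref{covintro} and Theorem~\ref{Q_invariant}) shows that both sides solve the same weak Cauchy problem. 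With this in hand, measurability and covariance of $\U(t,s)(A_\omega)$ follow at once; the algebraic identities $\U(t,r)\U(r,s)=\U(t,s)$, $\U(t,t)=I$, and $\{\U(t,s)(A_\omega)\}^\ddagger=\U(t,s)(A_\omega^\ddagger)$ are direct consequences of \eqref{U(t,r)U(r,s)}--\eqref{U(t,s)_st_conti}, $U_\omega(t,s)^*=U_\omega(s,t)$, and the module structure in \eqref{leftrightM}.

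The substantive step is the isometry. For $\K_\infty$ it is fibrewise trivial. For $\K_2$ the key input is the elementary identity, valid for every positive covariant $Y_\omega$ with finite trace per unit volume,
\begin{equation*}
\E\bigl\{\tr\{\chi_a Y_\omega\chi_a\}\bigr\}=\E\bigl\{\tr\{\chi_0 Y_\omega\chi_0\}\bigr\}\qquad(a\in\Z^d),
\end{equation*}
which follows from \eqref{covintro}, cyclicity of the trace, and $\tau$-invariance of $\P$. Inserting a resolution of the identity $\sum_b\chi_b=I$ and applying this identity to $Y_\omega=|A_\omega|^2$ and to $Y_\omega=|\U(t,s)(A_\omega)|^2=U_\omega(t,s)|A_\omega|^2 U_\omega(s,t)$, the fibrewise unitarity of $U_\omega(t,s)$ collapses the propagator factors and yields $\tnorm{\U(t,s)(A_\omega)}_2=\tnorm{A_\omega}_2$. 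The same scheme, using $|\U(t,s)(A_\omega)|=U_\omega(t,s)|A_\omega|U_\omega(s,t)$, handles $\K_1$; local boundedness and hence membership in $\K_i$ emerge as byproducts. Since $\U(t,s)$ is then an isometry on the normed space $\K_1$, it extends uniquely to $\overline{\K}_1$.

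For joint strong continuity on $\K_2$ (and $\overline{\K}_1$) I would telescope,
\begin{equation*}
\U(t,s)(A_\omega)-\U(t',s')(A_\omega) = (U_\omega(t,s)-U_\omega(t',s'))A_\omega U_\omega(s,t) + U_\omega(t',s')A_\omega(U_\omega(s,t)-U_\omega(s',t')),
\end{equation*}
and invoke the strong continuity of $U_\omega(t,s)$ from \eqref{U(t,s)_st_conti}: for each fixed $\omega$, the standard fact that a uniformly bounded, strongly null sequence composed with a Hilbert--Schmidt operator converges to zero in $\|\cdot\|_2$ (dominated convergence on the spectral sum) handles each fibre, and a second dominated convergence in $\omega$, controlled by the isometry bound just proved, delivers $\tnorm{\cdot}_2$-convergence. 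A density argument then transfers this to $\overline{\K}_1$. \textbf{The main obstacle} I anticipate is the interchange of limits in the isometry computation -- expectation, the lattice sum $\sum_b\chi_b$, and the fibre trace -- which should be treated by first establishing the identities on a dense subspace of operators supported on finitely many cubes, where cyclicity and Fubini are unproblematic, and then extending by the isometry bound itself, in the spirit of \cite[Section~3]{BGKS}.
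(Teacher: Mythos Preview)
The paper does not prove this proposition at all; it simply quotes it as \cite[Prop.~4.7]{BGKS} and states it ``without its proof.'' Your sketch is therefore not competing with anything in the present paper, and it is essentially the argument one finds in \cite[Sections~3--4]{BGKS}, so it is both correct and aligned with the source the paper defers to. The covariance of $U_\omega(t,s)$ via the uniqueness clause of Theorem~\ref{WeakYosida} is exactly the right idea (measurability in $\omega$ comes along for free from the construction $U_\omega(t,s)=\lim_k U_{k,\omega}(t,s)$ in Lemma~\ref{U_k_Convergence}), and your treatment of strong continuity by dominated convergence is standard.

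One simplification you may want: for the $\K_2$-isometry, half the work is free. Left multiplication by a covariant unitary $V_\omega$ is trivially an isometry on $\K_2$, since $\chi_0|\overline{V_\omega A_\omega}|^2\chi_0=(A_\omega\chi_0)^*V_\omega^*V_\omega(A_\omega\chi_0)=\chi_0|\overline{A_\omega}|^2\chi_0$ pointwise in $\omega$, with no resolution of the identity or translation invariance needed. The genuine content---the traciality of $\T$, i.e.\ \eqref{central}---is only required for the right action, and even there the cleanest route is via the anti-unitary $\cJ$: one has $\tnorm{A_\omega\odot_R V_\omega}_2=\tnorm{V_\omega^*\odot_L A_\omega^\ddagger}_2=\tnorm{A_\omega^\ddagger}_2$, and the remaining identity $\tnorm{A_\omega^\ddagger}_2=\tnorm{A_\omega}_2$ is precisely where \eqref{central} enters. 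So your ``main obstacle'' (the interchange of expectation, lattice sum, and fibre trace) is confined to this one place, and it is exactly what \cite[Section~3]{BGKS} establishes.
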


\subsection{The operators $\H_L$ and $\H_R$ and their domains}
Since $U_\omega(t,s)$ depends on the electric field $\El$, let
 \begin{equation} \label{U0}
U_\omega(\El= 0, t,s)  = U^{(0)}_\omega(t-s) := \e^{-i(t-s) H_\omega} \, .
\end{equation}
For $\El=0$, we consider 
\begin{align}  \label{U00}
\U^{(0)}(t) (A_\omega) := U^{(0)}_\omega(t) \odot_L A_\omega \odot_R
 U^{(0)}_\omega(-t)
\;\;\;\mbox{for \ $A_\omega \in \K_{\odot}\, $.}
\end{align}
It turns out $U^{(0)}_\omega(t) \odot_L A_\omega $ is a strongly continuous semi-group on $\K_2$; there is a self-adjoint operator $\H_L$ on $\K_2$ such that
\begin{align}
	e^{- it \H_L}A_\omega = U^{(0)}_\omega(t) \odot_L A_\omega  = e^{- it H_\omega} \odot_L A_\omega  \label{U0A}
\end{align}
with a domain $\D(\H_L)$. Similarly, we  define ${\H_L(s)}$ by
\begin{align}
	 \e^{-it {\H_L(s)} }A_\omega	= \e^{-it {H_\omega(s)}  } \odot_L A_\omega .\label{HL(s)} 
\end{align}
Note that $\H_L, \H_R,  \H_L(s),$ and $\H_R(s)\ge 1$ since $H_{\omega} \ge 1$.

Let  $\G(t)$  be the strongly continuous  unitary group on $\K_2$ given by (cf. \cite[Lemma 4.13]{BGKS})
\begin{align}
	\G(t) (A_\omega) := G(t) \, A_\omega G(t)^*, \label{Script_G(t)} \;\; \text{where $G(t) = \mathrm{e}^{i\F(t)\cdot x}$;}
\end{align}
  if $A_\omega \in \K_2$ with $[x_j, A_\omega] \in \K_2$ for $j=1,2,\cdots, d$, then
\begin{align}
	\partial_t \G(t)(A_\omega)  = i  [\El(t)\cdot x, \G(t)(A_\omega)] = i \G(t) ([\El(t)\cdot x, A_\omega]). \label{partial_G(t)}
\end{align}
Since  
\begin{align}
	\e^{-it {H_\omega(s)}  } = \G(s)  \left(\e^{-it {H_\omega}  } \right) =  G(s) \, \e^{-it {H_\omega}  }G(s) ^*,
\end{align}%
it follows that 
\begin{align}\label{GHG}
 {H_\omega(s)}     =  G(s) {H_\omega}  G(s) ^*,
\end{align}
and 
\begin{align}\notag
	 \G(s)  \left( \e^{-it {\H_L}  } A_\omega \right)& = \G(s)   \left( \e^{-it {H_\omega} } \odot_L A_\omega \right) =     \G(s)   \left(  \e^{-it { H_{\omega}}  } \right)   \odot_L   \G(s)   \left(A_\omega \right)      \\
  &   =  \e^{-it {H_\omega (s) }  }  \odot_L   \G(s)  \left(A_\omega \right )   =  \e^{-it {\H_L (s) }  }   \G(s)  \left(A_\omega \right ).
\end{align}
This shows
\begin{align}
	\e^{-it {\H_L (s) }  } A_\omega =  \G(s) \left( \e^{-it {\H_L}  } \G(s)^* ( A_\omega)\right),
\end{align}
and thus
\begin{align}
	\H_L (s)  = \G(s) \,  \H_L  \, \G(s)^* . \label{HL(s)=G(s)HL}
\end{align}

Recall that  $ H_\omega(t)^{\frac{1}{2}} A_\omega \in \K_2$ if  $A_\omega \H_c \subset \D  (H_\omega(t)^{\frac{1}{2}} )$ and  $H_\omega(t)^{\frac{1}{2}} A_\omega : \H_c \to \H$ is in $\K_2$.
Let
\begin{align} \label{DL0}
  	\D_L^{(0)}  := \left\{ A_\omega \in \K_2 |  H_\omega^{\frac{1}{2}} A_\omega  \in \K_2 \right \}, \quad 	\D_{L,t}^{(0)}  := \left\{ A_\omega \in \K_2 |  {H_\omega(t)}^{\frac{1}{2}} A_\omega \in \K_2 \right \}.
\end{align}

\begin{proposition} \label{DL0_Core} $\D_{L }^{(0)}  $  and $\D_{L,t}^{(0)}  $ are operator cores for ${\H_L }^{\frac{1}{2}}$ and ${\H_L(t)}^{\frac{1}{2}}$, respectively, for all $t\in \R$.
Moreover, 
 $  \D_{L,t}^{(0)} = \D_L^{(0)}$ and $\D  ({\H_L(t)}^{\frac{1}{2}} ) = \D ({\H_L}^{\frac{1}{2}} )  $  for all $t\in \R$.   
\end{proposition}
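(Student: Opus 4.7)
The strategy is to use the semigroup identity $e^{-\epsilon\H_L}A_\omega = e^{-\epsilon H_\omega}\odot_L A_\omega$ (a consequence of \eqref{U0A} and the functional calculus) to produce regularizing approximants inside $\D_L^{(0)}$, and then to transfer all information from $\H_L$ to $\H_L(t)$ by means of the bounded operator $\Gamma_\omega(t,0) = H_\omega(t)^{1/2}H_\omega^{-1/2}-I$ supplied by Theorem~\ref{Q_invariant}. The identical regularization applied to $(\H_L(t),H_\omega(t))$ will handle $\D_{L,t}^{(0)}$, and the final claim $\D(\H_L(t)^{1/2})=\D(\H_L^{1/2})$ will follow by closing the common core in two equivalent graph norms.

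\textbf{Step 1 (cores and $\D_L^{(0)}\subset \D(\H_L^{1/2})$).} For $A_\omega\in\K_2$ and $\epsilon>0$ set $A_\omega^\epsilon := e^{-\epsilon\H_L}A_\omega = e^{-\epsilon H_\omega}\odot_L A_\omega$. The bound $\sup_{\lambda\ge 1}\lambda^{1/2}e^{-\epsilon\lambda}<\infty$ yields $H_\omega^{1/2}e^{-\epsilon H_\omega}\in\K_\infty$, so $H_\omega^{1/2}A_\omega^\epsilon = (H_\omega^{1/2}e^{-\epsilon H_\omega})\odot_L A_\omega\in\K_2$, i.e.\ $A_\omega^\epsilon\in\D_L^{(0)}\cap\D(\H_L^{1/2})$. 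For $A_\omega\in\D(\H_L^{1/2})$, $A_\omega^\epsilon\to A_\omega$ and $\H_L^{1/2}A_\omega^\epsilon = e^{-\epsilon\H_L}\H_L^{1/2}A_\omega\to \H_L^{1/2}A_\omega$ in $\K_2$, which proves $\D_L^{(0)}$ is a core for $\H_L^{1/2}$. Conversely, for $A_\omega\in\D_L^{(0)}$ the identity $\H_L B_\omega = H_\omega\odot_L B_\omega$ (valid on $\D(\H_L)$ with $B_\omega\H_c\subset\D(H_\omega)$) combined with $[e^{-\epsilon H_\omega},H_\omega^{1/2}]=0$ gives $\H_L^{1/2}A_\omega^\epsilon = e^{-\epsilon H_\omega}\odot_L(H_\omega^{1/2}A_\omega)\to H_\omega^{1/2}A_\omega$ in $\K_2$, and closedness of $\H_L^{1/2}$ then forces $A_\omega\in\D(\H_L^{1/2})$ with $\H_L^{1/2}A_\omega = H_\omega^{1/2}A_\omega$. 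The same argument, using \eqref{HL(s)} in place of \eqref{U0A}, handles $\D_{L,t}^{(0)}$ and $\H_L(t)^{1/2}$.

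\textbf{Step 2 ($\D_L^{(0)}=\D_{L,t}^{(0)}$).} By Theorem~\ref{Q_invariant} the operator $\Gamma_\omega(t,0)+I = H_\omega(t)^{1/2}H_\omega^{-1/2}$ is bounded; its norm, in view of the proof of that theorem, is controlled by a quantity depending only on $\alpha,\beta$ and $\F(t)$, hence uniformly in $\omega$. Covariance of $H_\omega$ and $H_\omega(t)$ transfers to $\Gamma_\omega(t,0)$, so $\Gamma_\omega(t,0)+I\in\K_\infty$; by symmetry its inverse $\Gamma_\omega(0,t)+I = H_\omega^{1/2}H_\omega(t)^{-1/2}$ also lies in $\K_\infty$. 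Consequently, for $A_\omega\in\D_L^{(0)}$,
\[
 H_\omega(t)^{1/2}A_\omega = (\Gamma_\omega(t,0)+I)\odot_L(H_\omega^{1/2}A_\omega)\in\K_2,
\]
so $A_\omega\in\D_{L,t}^{(0)}$; the reverse inclusion is identical. This gives $\D_L^{(0)}=\D_{L,t}^{(0)}$ together with a two-sided norm equivalence between $\tnorm{H_\omega^{1/2}A_\omega}_2$ and $\tnorm{H_\omega(t)^{1/2}A_\omega}_2$ on this common set.

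\textbf{Step 3 (form domains).} Combining Step 1 (equality of $\tnorm{\H_L^{1/2}A_\omega}_2$ with $\tnorm{H_\omega^{1/2}A_\omega}_2$, and analogously at time $t$) with Step 2, the graph norms of $\H_L^{1/2}$ and $\H_L(t)^{1/2}$ are equivalent on the common core $\D_L^{(0)}=\D_{L,t}^{(0)}$. A sequence there is Cauchy in one graph norm iff it is Cauchy in the other, with the same $\K_2$-limit, so closing in each graph norm produces the same subspace of $\K_2$, yielding $\D(\H_L(t)^{1/2})=\D(\H_L^{1/2})$. The main technical point is promoting the pointwise a.s.\ boundedness of $\Gamma_\omega(t,0)$ to membership in $\K_\infty$, which relies on the covariance structure and on the $\omega$-uniformity of the estimates extracted from the proof of Theorem~\ref{Q_invariant}; the remainder is standard semigroup regularization.
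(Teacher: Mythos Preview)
Your argument is correct and essentially complete, but it follows a genuinely different path from the paper's proof.

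For the core property, the paper first shows $\D_L^{(0)}\subset\D(\H_L^{1/2})$ by differentiating the unitary group $e^{-it\H_L^{1/2}}$ at $t=0$, then proves density in the graph norm via approximants $f_n(H_\omega(t)^{1/2})\odot_L A_\omega$ with $f_n\to\delta$ compactly supported, and finally invokes the invariance criterion \cite[Theorem~VIII.11]{RS1} (showing $e^{is\H_L(t)^{1/2}}\D_{L,t}^{(0)}\subset\D_{L,t}^{(0)}$) to conclude that $\D_{L,t}^{(0)}$ is a core. You instead use heat-semigroup regularization $A_\omega^\epsilon=e^{-\epsilon\H_L}A_\omega$, which simultaneously produces elements of $\D_L^{(0)}$ and converges in graph norm; this is more economical and avoids the separate density and invariance checks.

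For $\D_L^{(0)}=\D_{L,t}^{(0)}$, the paper works directly with the quadratic form, expanding $\la\la H_\omega(t)^{1/2}A_\omega,H_\omega(t)^{1/2}A_\omega\ra\ra$ via \eqref{eq20}--\eqref{eq22} and bounding the cross term $\la\la A_\omega,\F(t)\cdot\Db A_\omega\ra\ra$ by Cauchy--Schwarz. You instead package the same information into the single statement $\Gamma_\omega(t,0)+I\in\K_\infty$ and use the $\K_\infty$-module structure of $\K_2$; this is cleaner, though it requires (as you note) checking covariance and $\omega$-uniformity of the bound, which the paper's quadratic-form expansion exhibits more explicitly.

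For the final domain equality, both arguments are essentially the same Cauchy-sequence/graph-norm-equivalence step on a common core.

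In short: your route trades the paper's explicit form expansion and group-invariance criterion for semigroup smoothing and the bounded-intertwiner $\Gamma+I$; it is slightly more abstract but arguably more direct.
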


\begin{proof}

We first show that $\D_{L }^{(0)} \subset \D   ({\H_L  }^{\frac{1}{2}})$ and  
\begin{align}
{\H_L}^{\frac{1}{2}}  A_\omega  = {H_\omega}^{\frac{1}{2}}A_\omega \quad \text{for all $A_\omega  \in \D_{L }^{(0)}$}.  \label{HLAHA}
\end{align}
Indeed, if $\varphi \in \H_c$\, and $A_\omega \in 	\D_L^{(0)}$, then as $t \to 0$
\begin{align}
	\frac{i}{t} \left ( e^{-it {H_\omega}^{\frac{1}{2}}} - 1 \right )  A_\omega \varphi \rightarrow {H_\omega}^{\frac{1}{2}} A_\omega \varphi \quad  
\end{align}
 for all $\varphi \in \H_c$\,. Thus, on account of \eqref{U0A}
\begin{align}
 {\H_L}^{\frac{1}{2}}  A_\omega = i \partial_t   \left. e^{-it {\H_L}^{\frac{1}{2}}}  A_\omega \right |_{t=0} = \lim_{t \to 0} \frac{i}{t} \left ( e^{-it {H_\omega}^{\frac{1}{2}}} - 1 \right )  A_\omega = {H_\omega}^{\frac{1}{2}}A_\omega.  
\end{align}    
Similarly,  $\D_{L,t}^{(0)} \subset \D  ({\H_L (t)}^{\frac{1}{2}})$ and 
\begin{align}
     {\H_L (t)}^{\frac{1}{2}} A_\omega = {H_\omega (t)}^{\frac{1}{2}}A_\omega  \quad \text{for all $A_\omega  \in \D_{L, t }^{(0)}$}.  \label{HLtAHtA}
       \end{align}
 
 We now show that $\D_{L,t }^{(0)}$ is dense in  $\D   ({\H_L (t)}^{\frac{1}{2}})$; that $\D_{L }^{(0)}$ is dense in  $\D   ({\H_L  }^{\frac{1}{2}})$ can be shown similarly. Let $A_\omega \in \D  ({\H_L (t)}^{\frac{1}{2}}) $ and $B_{ n, \omega}  := f_n \left({H_\omega (t)}^{\frac{1}{2}}\right) \odot_L A_\omega$, where $\{f_n\}$ is a sequence of smooth, measurable, and compactly supported functions that converges to $\delta$, the delta function. Then, $ B_{ n, \omega}\in \D_{L,t}^{(0)}$ and $B_{n, \omega} \to A_\omega $ in $\K_2$ as $n \to \infty$. Moreover,
\begin{align}
	&{\H_L(t)}^{\frac{1}{2}} B_{n, \omega} =  f_n  \left({ H_{\omega}(t)}^{\frac{1}{2}} \right) \odot_L  {\H_L(t)}^{\frac{1}{2}} A_\omega  \rightarrow   \, {\H_L(t)}^{\frac{1}{2}} A_\omega.
\end{align}

To conclude that  $\D_{L, t}^{(0)}$ is a core for ${\H_L (t)}^{\frac{1}{2}}$, it   suffices to show  that (see \cite[Theorem VIII.11]{RS1})
\begin{align}
e^{is \, {\H_L (t)}^{\frac{1}{2}}} \D_{L, t }^{(0)} \, \subset \, \D_{L, t }^{(0)} \quad \text {for all $s\in \R$}\label{eHLD_invar}.
\end{align}

Since
\begin{align}
	e^{is \, {H_\omega (t)}^{\frac{1}{2}}} \D  ({H_\omega (t) }^{\frac{1}{2}} ) = \D  ({H_\omega (t) }^{\frac{1}{2}}) \,, \label{eD=D}
\end{align}
given $B_\omega \in \D_{L, t }^{(0)}$, we note that  
\begin{align}
	  e^{is \, {\H_L (t)}^{\frac{1}{2}}} B_\omega \H_c =   \left (e^{is \, {H_\omega (t)}^{\frac{1}{2}}} \odot_L B_\omega  \right )\H_c = e^{is \, {H_\omega (t)}^{\frac{1}{2}}}  B_\omega \H_c   \subset \D  ({H_\omega (t) }^{\frac{1}{2}}). \label{eHLBinD}
\end{align}
Moreover, we have  
\begin{align}
&  {H_\omega  (t)}^{\frac{1}{2}}    e^{is \, {\H_L (t)}^{\frac{1}{2}}}  B_\omega  = {H_\omega (t)}^{\frac{1}{2}}   e^{is \, {H_\omega (t)}^{\frac{1}{2}}}   B_\omega   =  e^{is \, {\H_L (t)}^{\frac{1}{2}}}  {H_\omega  (t)}^{\frac{1}{2}}  B_\omega   \in \K_{2 .} \label{H(t)1/2EHLBinK2}
\end{align}
The desired \eqref{eHLD_invar} follows   from \eqref{eHLBinD} and  \eqref{H(t)1/2EHLBinK2}. Similarly,  $\D_{L }^{(0)}$ is a core for ${\H_L  }^{\frac{1}{2}}$.

If $A_{\omega} \in \D_L^{(0)}$,  
$H_\omega (t) ^{\frac{1}{2}} A_\omega$ is well-defined on $\H_c$ by Theorem~\ref{Q_invariant}.
Moreover, in view of \eqref{dblinnerprod_1}, \eqref{eq20}-\eqref{eq22},  
\begin{align}
 &   \la  \la H_\omega (t) ^{\frac{1}{2}} A_\omega, H_\omega (t) ^{\frac{1}{2}} A_\omega   \ra   \ra   \label{innerprod_HLt1/2}\\  
& 	\quad =      \la  \la  {H_\omega}^{\frac{1}{2}} A_\omega, {H_\omega}^{\frac{1}{2}} A_\omega   \ra  \ra - 2 \text{Re}\left \la \left \la A_\omega, \F(t)\cdot \Db\, A_\omega  \right \ra \right \ra +  |\F(t)|^2 \left \la \left \la A_\omega, A_\omega   \right \ra \right \ra . \nonumber
\end{align}
Since  $A_{\omega} \in \D_L^{(0)}$,  we have $|\F(t)|^2 \left \la \left \la A_\omega, A_\omega   \right \ra \right \ra =|\F(t)|^2 \tnorm{A_\omega}_2^2 <  \infty$,   
\begin{align}
	  \la \la  {H_\omega}^{\frac{1}{2}} A_\omega, {H_\omega}^{\frac{1}{2}} A_\omega  \ra\ra = \tnorm{{H_\omega}^{\frac{1}{2}} A_\omega}_2^2 < \infty\, ,
\end{align}
and by the Cauchy-Schwarz inequality
\begin{align}
	\left \la \left \la A_\omega, \F(t)\cdot \Db\, A_\omega  \right \ra \right \ra \le |\F(t)| \tnorm{ A_\omega}_2 \tnorm{{H_\omega}^{\frac{1}{2}}  A_\omega}_2 < \infty. \label{FromCauchySchwarz}
\end{align}
This shows  ${H_\omega(t)}^{\frac{1}{2}} A_\omega \in \K_2$  and, hence,
 $ \D_L^{(0)} \subset  \D_{L,t}^{(0)} $. A similar argument shows that $\D_{L,t}^{(0)} \subset  \D_{L}^{(0)}$, and we have $ \D_{L,t}^{(0)} = \D_{L}^{(0)}$.

To prove the last claim of the proposition,   by \eqref{HLAHA} and \eqref{HLtAHtA} we can use the arguments in \eqref{innerprod_HLt1/2}-\eqref{FromCauchySchwarz} to show that, given a Cauchy sequence $A^{(n)}_{\omega}$ in  $\K_{2}$  with $A^{(n)}_{\omega} \in  \D_{L}^{(0)} = \D_{L,t}^{(0)}$ for all $n$, 
 ${\H_L}^{\frac{1}{2}}A^{(n)}_{\omega} $ is a Cauchy sequence in $\K_{2}$ if and only if ${\H_L(t)}^{\frac{1}{2}}A^{(n)}_{\omega} $ is a Cauchy sequence in $\K_{2}$.  Since $ \D_{L}^{(0)}$ is an operator core for both ${\H_L}^{\frac{1}{2}}$ and ${\H_L(t)}^{\frac{1}{2}}$, we conclude that  $\D ({\H_L(t)}^{\frac{1}{2}}) = \D  ({\H_L}^{\frac{1}{2}})$.
 \end{proof}

In a similar fashion,  we define
$e^{- it \H_R} A_\omega  :=  A_\omega  \odot_R e^{it H_\omega}$ and $ e^{- it \H_R (s)} A_\omega  :=  A_\omega  \odot_R  e^{it H_\omega(s)}$,
where $\H_R \ge 1$ and $\H_R(t) \ge 1$ are self-adjoint operators on $\D(\H_R)$ and $\D(\H_R(t))$, respectively, and  let
\begin{align} \label{DR0}
	\D_R^{(0)}  := \left\{ A_\omega \in \K_2 |  H_\omega^{\frac{1}{2}} A_\omega^{\ddagger} \in \K_2 \right \}, \quad 	\D_{R,t}^{(0)}  := \left\{ A_\omega \in \K_2 |  {H_\omega(t)}^{\frac{1}{2}} A_\omega^{\ddagger} \in \K_2 \right \}.
\end{align}

Recalling that  $ \cJ : A \mapsto A^{\ddagger}$  is an anti-unitary map on $\K_2$, and that \eq{leftrightM} can be rewritten as  ${A_\omega} \odot_R B_\omega =\cJ  \left(B_\omega^\ast \odot_L {\cJ A_\omega}\right)$, we immediately have the following proposition.

\begin{proposition} \label{HR=HLdagger} We have $\cJ \, \H_L   \,  \cJ = \H_R$, that is, 
 $\D  ( {\H_R}^{\frac{1}{2}} ) = \left \{ \D  ( {\H_L}^{\frac{1}{2}} )\right \} ^{\ddagger}$ and  \begin{align}
	{\H_R}^{\frac{1}{2}} A_\omega = \left( {\H_L }^{\frac{1}{2}} A_\omega^{\ddagger} \right)^{\ddagger}   \quad \text{for all $ A_\omega \in \D   ( {\H_R}^{\frac{1}{2}} )$}.
\end{align}
Similarly, 	$\cJ \, \H_L (t) \,  \cJ = \H_R(t)$ for all $t$.
In particular, the appropriate modification 
of  Proposition ~\ref{DL0_Core} holds for ${\H_R}^{\frac{1}{2}}$ and  ${\H_R(t)}^{\frac{1}{2}}$.
\end{proposition}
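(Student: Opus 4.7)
The entire proposition is a transport statement along the anti-unitary involution $\cJ \colon A_\omega \mapsto A_\omega^{\ddagger}$ on $\K_2$, and the whole argument is driven by the identity $A_\omega \odot_R B_\omega = \cJ\bigl(B_\omega^{\ast} \odot_L \cJ A_\omega\bigr)$ from \eqref{leftrightM}, together with $\cJ^2 = I$ on $\K_2$.

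First I would apply this identity with $B_\omega = e^{itH_\omega}$. Using the defining relations $e^{-it\H_L} A_\omega = e^{-itH_\omega} \odot_L A_\omega$ and $e^{-it\H_R} A_\omega = A_\omega \odot_R e^{itH_\omega}$, one obtains immediately the intertwining $e^{-it\H_R} A_\omega = \cJ\bigl(e^{-itH_\omega} \odot_L A_\omega^{\ddagger}\bigr) = \cJ\, e^{-it\H_L}\, \cJ\, A_\omega$ on $\K_2$; that is, the strongly continuous unitary groups generated by $\H_R$ and by the $\cJ$-conjugate of $\H_L$ coincide. A short functional-calculus computation, in which the anti-linearity $\cJ \cdot c = \bar c \cdot \cJ$ on scalars must be tracked carefully, identifies $\cJ \H_L \cJ$ as a self-adjoint positive operator with domain $\{\D(\H_L)\}^{\ddagger}$; Stone's theorem then gives $\H_R = \cJ \H_L \cJ$. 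Applying the spectral theorem to the square root yields ${\H_R}^{1/2} = \cJ\, {\H_L}^{1/2}\, \cJ$, hence $\D({\H_R}^{1/2}) = \{\D({\H_L}^{1/2})\}^{\ddagger}$ together with ${\H_R}^{1/2} A_\omega = \bigl({\H_L}^{1/2} A_\omega^{\ddagger}\bigr)^{\ddagger}$ for $A_\omega \in \D({\H_R}^{1/2})$, which is the first assertion. Substituting $H_\omega(t)$ for $H_\omega$ and $\H_L(t)$ for $\H_L$ throughout yields $\cJ\, \H_L(t)\, \cJ = \H_R(t)$.

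For the final clause -- the appropriate modification of Proposition~\ref{DL0_Core} for the right quantities -- a direct comparison of the defining conditions \eqref{DL0} and \eqref{DR0} shows $\cJ\,\D_L^{(0)} = \D_R^{(0)}$ and $\cJ\,\D_{L,t}^{(0)} = \D_{R,t}^{(0)}$. Since $\cJ$ is a norm-preserving involution on $\K_2$ that intertwines ${\H_L}^{1/2}$ with ${\H_R}^{1/2}$ and ${\H_L(t)}^{1/2}$ with ${\H_R(t)}^{1/2}$, it maps operator cores to operator cores and preserves the closure domains. Consequently Proposition~\ref{DL0_Core} transfers directly: $\D_R^{(0)}$ is an operator core for ${\H_R}^{1/2}$ (and similarly $\D_{R,t}^{(0)}$ for ${\H_R(t)}^{1/2}$), while the identifications $\D_{R,t}^{(0)} = \D_R^{(0)}$ and $\D({\H_R(t)}^{1/2}) = \D({\H_R}^{1/2})$ for all $t \in \R$ follow as the images under $\cJ$ of the corresponding identifications in Proposition~\ref{DL0_Core}.

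The only delicate point is the sign tracking in the functional-calculus identification of the generator of $t \mapsto \cJ e^{-it\H_L} \cJ$; modulo this, the proof consists entirely of mechanical transport of definitions along the anti-unitary involution $\cJ$.
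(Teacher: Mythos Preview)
Your proposal is correct and follows precisely the route the paper indicates: the paper does not give a detailed proof but simply observes, in the sentence preceding the proposition, that the result is immediate from the anti-unitarity of $\cJ$ together with the rewriting ${A_\omega} \odot_R B_\omega = \cJ\bigl(B_\omega^{\ast} \odot_L \cJ A_\omega\bigr)$ of \eqref{leftrightM}. You have spelled out exactly the details the paper leaves implicit --- the intertwining of the unitary groups, Stone's theorem, and the transport of Proposition~\ref{DL0_Core} along $\cJ$ --- so your argument is a faithful expansion of the paper's one-line justification.
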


\subsection{The quadratic forms $\HH_{L, t}$\,, $\HH_{R, t}$\,, and $\LL_t$}

Setting $Q(\H_L(t)) := \D  ({\H_L(t)} ^{\frac{1}{2}} )$ and  $Q(\H_R(t)) := \D  ({\H_R(t) }^{\frac{1}{2}})$, 
we define the following quadratic forms:
\begin{align} \label{quadform_HL}
	 	\HH_{L, t} (A_\omega,B_\omega)&  =  \la   \la {\H_L(t)}^{\frac{1}{2}} A_\omega,  {\H_L(t)}^{\frac{1}{2}} B_\omega   \ra   \ra \;\;\;\; \text{on }\ Q(\H_L(t)) \,  , \\
	  \HH_{R, t} (A_\omega,B_\omega)  & =   \la   \la {\H_R(t)}^{\frac{1}{2}} A_\omega,  {\H_R(t)}^{\frac{1}{2}} B_\omega  \ra  \ra  \;\;\;\;\text{on } \ Q(\H_R(t)) \,  ,  \label{quadform_HR}    
	 \end{align}
	 where the inner product $\la\la \cdot, \cdot \ra\ra$ on $\K_2$ is as in \eqref{dblinnerprod_1}.
By Proposition~\ref{DL0_Core}, 
\begin{align}
	Q(|\LL|) := Q(|\LL_t|) = Q(\H_L(t)) \cap Q(\H_R(t))\, , \label{QL_t} 
\end{align}
and  set
\begin{align}
	 |\LL_t| (A_\omega,B_\omega) & = \HH_{L,t}(A_\omega,B_\omega) + \HH_{R,t}(A_\omega,B_\omega) \;\;\;\text{on \,} Q(|\LL|) \, . \label{quadform_L} 
\end{align}
We remark  that   \eqref{quadform_HL}, \eqref{quadform_HR}, and  \eqref{quadform_L} are all closed forms, and on  $Q(|\LL|)$,  
\begin{align}
	 \HH_{R, t} (A_\omega,B_\omega)  = 	\HH_{L, t} \left(B_\omega^{\ddagger},A_{\omega}^{\ddagger}\right)
\end{align}
by Proposition~\ref{HR=HLdagger} and the fact that the map $\cJ\colon  A_{\omega} \mapsto A_\omega^{\ddagger} $ is anti-unitary on $\K_2$. 

There is  a  continuous bilinear map $\diamond : \K_2 \times \K_2 \to \K_1$, defined by  
\begin{align}
	\diamond (A_\omega, B_\omega) := A_\omega \diamond B_\omega = A_\omega B_\omega \quad \text{for}\quad   A,B \in \K_2\cap  \K_\infty  \ .
\end{align}
Given  $A,B \in \K_2$ and $C \in \K_{\infty}$, it is shown in \cite[Lemma 3.21]{BGKS} that we have  % ($\T$, the  trace per unit volume, is defined in \eqref{tuvintro})   
\begin{gather}\label{centralK2}
\T(A_\omega\diamond B_\omega) =
\left\la\left\la A_\omega^\ddagger, B_\omega \right\ra\right\ra  \, ,\\
 \label{central}
\T(A_\omega\diamond B_\omega) =
\T(B_\omega\diamond A_\omega)   \, , \\
\label{central2}
\T( (C_\omega\odot_L A_\omega)\diamond B_\omega) =
\T(A_\omega \diamond (B_\omega \odot_R C_\omega))   \, .
\end{gather}
($\T$, the  trace per unit volume, is defined in \eqref{tuvintro}.) 
We also recall \cite[Lemma 3.24]{BGKS} that, if $B_{n,\omega}$ is a bounded sequence in
$\K_\infty$ such that $B_{n,\omega}\to B_\omega$ weakly,  then   for all
$A_\omega \in \K_1$ we have 
\begin{align}
 \T (B_{n,\omega} \odot_L A_\omega) \to \T(B_{\omega}\odot_L A_\omega)  \text {\; and \;} \T ( A_\omega\odot_R B_{n,\omega}  ) \to
\T(A_\omega\odot_R B_{\omega}) . \label{lemmaduality2} 
\end{align}
 
Let us set 
\begin{align} \label {Q(0)}
 Q^{(0)} :=  \D_{L }^{(0)} \cap  \D_{R}^{(0)}  =  \left \{A_\omega \in \K_2 \,  | \, {H_\omega(t)}^{\frac{1}{2}}  A_\omega, {H_\omega(t)}^{\frac{1}{2}}  A_\omega^{\ddagger} \in \K_2 \right \}.
\end{align}
where independence of $Q^{(0)}$ in $t$ is justified by Proposition~\ref{DL0_Core}. We also define $	\LL_t$  on $Q(|\LL|)$ by 
\begin{align} 
	\LL_t (A_\omega,B_\omega) := \HH_{L,t}(A_\omega,B_\omega) - \HH_{R,t}(A_\omega,B_\omega)
	\label{L_t}
\end{align}
where definitions \eqref{quadform_HL}-\eqref {QL_t} are used here. Recalling   \eqref{UtsUst}, we now show that, given $B_\omega \in \K_2$, $ \U(t,r) (B_\omega)$ is differentiable in both $t$ and $r$. We state this result in the following proposition.

\begin{proposition} \label{formula59}
Let $A_\omega \in Q(|\LL|)$ and $B_\omega \in Q^{(0)}$. The map $t \mapsto \U(t,r) (B_\omega) \in \K_2$ is differentiable in $\K_2$ in t he following sense: 
\begin{align}
& i \partial_{t} \la \la  A_\omega, \U(t,r) (B_\omega)\ra\ra =   \LL_t (  A_\omega,\U(t,r) (B_\omega )).  \label{eq57}
\end{align}
Similarly,  
\begin{align}
\quad & i \partial_{r} \la \la  B_\omega, \U(t,r) (A_\omega)\ra\ra = -  \LL_r (\U(r,t)  (B_\omega), A_\omega )\, . \label{eq58}
\end{align}
\end{proposition}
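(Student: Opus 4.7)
The plan is to interpret the formal identity
$i\partial_t\bigl[U_\omega(t,r)B_\omega U_\omega(r,t)\bigr]=[H_\omega(t),\U(t,r)(B_\omega)]$
as the sesquilinear-form identity \eqref{eq57}, using the piecewise-exponential propagators $U_{k,\omega}$ from Section~2 as approximants whose derivative is classical on each sub-interval. The first step is to verify that $\U(t,r)$ preserves $Q^{(0)}$. Writing
\[
H_\omega(t)^{\frac{1}{2}}\,\U(t,r)(B_\omega)\;=\;W_\omega(t,r)\,\bigl(H_\omega(r)^{\frac{1}{2}}\,B_\omega\bigr)\,U_\omega(r,t),
\]
where $W_\omega(t,r):=H_\omega(t)^{\frac{1}{2}}U_\omega(t,r)H_\omega(r)^{-\frac{1}{2}}$ is bounded by Lemma~\ref{U_k_Convergence} and covariant by \eqref{covintro} plus the uniqueness of the propagator, and since $H_\omega(r)^{\frac{1}{2}}B_\omega\in\K_2$ with $U_\omega(r,t)$ unitary, the product is in $\K_2$. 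The $\ddagger$-twin argument (via Proposition~\ref{HR=HLdagger}) gives the same for $\U(t,r)(B_\omega)^\ddagger$, so $\U(t,r)(B_\omega)\in Q^{(0)}$, with bounds uniform on compact intervals.

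For the approximating propagator $\U_k(t,s)(A_\omega):=U_{k,\omega}(t,s)\odot_L A_\omega\odot_R U_{k,\omega}(s,t)$, on a short sub-interval with grid point $\sigma_k(t):=\lfloor kt\rfloor/k$ one has $\U_k(t,t+h)(A_\omega)=e^{ihH_\omega(\sigma_k(t))}A_\omega e^{-ihH_\omega(\sigma_k(t))}$, a one-parameter unitary conjugation. For $A_\omega\in Q(|\LL|)$ and $\tilde B\in Q^{(0)}$, the standard commutator computation in the $\K_2$ inner product (applied in Hilbert--Schmidt form against $\chi_0$, and justified via the identifications $\la\la HA,\tilde B\ra\ra=\HH_{L,\sigma_k(t)}(A,\tilde B)$ and $\la\la AH,\tilde B\ra\ra=\HH_{R,\sigma_k(t)}(A,\tilde B)$ that follow from the quadratic-form definitions of $\HH_{L,t}$, $\HH_{R,t}$ and from Proposition~\ref{HR=HLdagger}) yields
\[
\partial_h\la\la\U_k(t,t+h)(A_\omega),\tilde B\ra\ra\big|_{h=0}\;=\;-i\,\LL_{\sigma_k(t)}(A_\omega,\tilde B).
\]
Combining this with the cocycle identity $\U_k(t+h,r)=\U_k(t+h,t)\U_k(t,r)$ and the unitarity of $\U_k$ on $\K_2$ (Proposition~\ref{propUomega} applied with $U_k$ in place of $U$), integration in $t$ gives
\[
\la\la A_\omega,\U_k(t,r)(B_\omega)\ra\ra - \la\la A_\omega,B_\omega\ra\ra\;=\;-i\int_r^t \LL_{\sigma_k(s)}\bigl(A_\omega,\U_k(s,r)(B_\omega)\bigr)\,ds.
\]

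Passing to the limit $k\to\infty$, the LHS converges by strong $\K_2$-convergence $\U_k(t,r)\to\U(t,r)$ (transferred from Lemma~\ref{U_k_Convergence} via Proposition~\ref{propUomega}). For the integrand, the uniform bound on $W_{k,\omega}(\cdot,\cdot)$ and the norm continuity of $s\mapsto H_\omega(s)^{\frac{1}{2}}\varphi$ on $Q$ from \eqref{st_conti_H1/2} together give pointwise convergence $\LL_{\sigma_k(s)}(A_\omega,\U_k(s,r)(B_\omega))\to\LL_s(A_\omega,\U(s,r)(B_\omega))$ with a uniform-in-$k$ dominating bound on compact intervals, so dominated convergence transfers the identity to the limit. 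Continuity of the limiting integrand in $s$ then yields \eqref{eq57} by differentiating. The companion identity \eqref{eq58} follows either by the symmetric argument (differentiating in the backward variable using \eqref{w_partial_s}) or by differentiating the cocycle relation $\U(r,t)\U(t,r)=I$ on $\K_2$ and combining with \eqref{eq57}.

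The principal technical obstacle is the convergence of the approximating forms $\LL_{\sigma_k(s)}(A_\omega,\U_k(s,r)(B_\omega))\to\LL_s(A_\omega,\U(s,r)(B_\omega))$ with a dominating bound, since this requires lifting the pointwise-in-$\varphi$ continuity \eqref{st_conti_H1/2} of $H_\omega(s)^{\frac{1}{2}}$ to the $s$-dependent family $\U_k(s,r)(B_\omega)\in\K_2$. The essential inputs are the Dunford--Taylor argument of Theorem~\ref{Q_invariant} (providing norm continuity of $s\mapsto H_\omega(s)^{-\frac{1}{2}}$ and the uniform bounds \eqref{Ntilde_I}--\eqref{C_I}) together with the $\K_\infty$-bimodule structure of $\K_2$, which propagates these uniform operator bounds to $\K_2$-norm bounds on the relevant products.
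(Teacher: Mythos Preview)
Your approach is correct in outline but takes a genuinely different route from the paper. You return to the piecewise-exponential approximants $U_{k,\omega}$, establish the form-level Liouville identity for $\U_k$, integrate, and then pass to the limit $k\to\infty$ in the resulting integral identity. The paper instead works directly with the limiting propagator $U_\omega(t,r)$: it writes the difference quotient as a trace-per-unit-volume expression via \eqref{centralK2}--\eqref{central2}, sandwiches the operator $(U_\omega(t+h,r)-U_\omega(t,r))$ between two factors of $H_\omega(t_0)^{-1/2}$ so that the reformulated weak derivative
\[
i\partial_t\,H_\omega(t_0)^{-1/2}U_\omega(t,s)H_\omega(t_0)^{-1/2}=(H_\omega(t)^{1/2}H_\omega(t_0)^{-1/2})^{*}\,H_\omega(t)^{1/2}U_\omega(t,s)H_\omega(t_0)^{-1/2}
\]
from Theorem~\ref{WeakYosida} applies, and then passes the (weak) limit inside $\T$ using \eqref{lemmaduality2}. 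A final cyclicity manipulation via \eqref{central2} recovers $\HH_{L,t}(A_\omega,\U(t,r)(B_\omega))$; the $\HH_{R,t}$ term is handled symmetrically, and \eqref{eq58} is deduced from \eqref{eq57} through the adjunction $\la\la A_\omega,\U(t,r)(B_\omega)\ra\ra=\la\la\U(r,t)(A_\omega),B_\omega\ra\ra$.

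The trade-off: the paper's argument is shorter because it cashes in Theorem~\ref{WeakYosida} once and uses only the $\K_1$--$\K_2$--$\K_\infty$ trace calculus, never revisiting the $U_k$. Your route is more self-contained but forces you to re-establish convergence at the $\K_2$ level; in particular, the step you flag as the ``principal technical obstacle''---convergence of $\LL_{\sigma_k(s)}(A_\omega,\U_k(s,r)(B_\omega))$ with a dominating bound---requires you to lift the operator-norm continuity of $\Gamma(t,s)$ to continuity of $t\mapsto\H_L(t)^{1/2}A_\omega$ in $\K_2$ for $A_\omega\in Q(|\LL|)\setminus Q^{(0)}$. This is doable (via $\H_L(t)^{1/2}\H_L(s)^{-1/2}=(\Gamma(t,s)+I)\odot_L$ and the $\K_\infty$-module structure), but it is exactly the kind of bookkeeping the paper's trace-based argument sidesteps.
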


\begin{proof}
We first focus on  \eqref{eq57}.  Let us  note that
\begin{align}
& i \partial_{t} \la\la A_\omega , \U(t,r) (B_\omega) \ra\ra   =  i \partial_{t} \la\la  A_\omega, U_\omega (t,r) \odot_L  B_\omega \odot_R U_\omega (r,t) \ra\ra \\
& = \lim_{h \to 0}   \frac{i}{h}  \la\la A_\omega ,   \left (  U_\omega (t +h, r ) - U_\omega(t,r) \right ) \odot_L  B_w \odot_R U_\omega (r,t) \, \ra\ra \label {eq59a} \\ 
& \quad \,   +  \lim_{h \to 0}   \frac{i}{h}  \la\la A_\omega ,  U_\omega (t,r) \odot_L  B_w \odot_R \left ( U_\omega (r, t +h ) - U_\omega(r,t) \right )   \, \ra\ra \, . \label {eq59b}
\end{align}  
Using \eq{centralK2}-\eq{central2}, we can first rewrite \eqref{eq59a} as follows:
\begin{align}
&\lim_{h \to 0} \frac{i}{h} \left \la \left \la A_\omega, \left ( U_\omega( t+h, r) - U_\omega(t,r) \right \} \odot_L B_\omega \odot_R U_\omega (r,t) \, \right \ra \right \ra \\
& = \lim_{h \to 0} \frac{i}{h}   \la   \la \, {\H_L(t_0)}^{\frac{1}{2}} A_\omega , {\H_L(t_0)}^{- \frac{1}{2}} \left \{ U_\omega( t+h, r) - U_\omega(t,r)  \right \} {H_\omega (t_0)}^{- \frac{1}{2}} \odot_L   \nonumber\\
&     \qquad  \qquad \qquad  \odot_L  {H_\omega (t_0)}^{\frac{1}{2}} B_\omega \odot_R U_\omega (r,t) \,  \ra   \ra \\
& = \lim_{h \to 0} \textstyle{\frac{i}{h}} \, \T \left (   {\H_L(t_0)}^{- \frac{1}{2}} \left \{ U_\omega( t+h, r) - U_\omega(t,r)  \right \}  {H_\omega (t_0)}^{- \frac{1}{2}} \odot_L \right.\nonumber \\
&  \left.  \qquad  \qquad \qquad \qquad  \quad \odot_L {H_\omega (t_0)}^{\frac{1}{2}} B_\omega \odot_R U_\omega(r,t) \diamond ({\H_L(t_0)}^{\frac{1}{2}} A_\omega )^{\ddagger}  \right ) \, . \label{eq59c}
\end{align}

We take  the limit  in  \eqref{eq59c} inside the trace per unit volume, using \eq{lemmaduality2}.  Remember that  $\H_L(t)^{- \frac{1}{2}} = H_\omega(t)^{- \frac{1}{2}} \odot_L$,  and note the following reformulation of \eqref{w_partial_t} in Theorem~\ref{WeakYosida}:
\begin{align}
	i\partial_t H(t_0)^{-\frac{1}{2}} U(t,s)H(t_0)^{-\frac{1}{2}} &=  (H(t)^{\frac{1}{2}} H(t_0)^{-\frac{1}{2}})^*  H(t)^{\frac{1}{2}} U(t,s) H(t_0)^{-\frac{1}{2}}. \label{weakpartial_t}%\\
\end{align}
\eqref{eq59c} is now equal to:
\begin{align}
& = \T \left ( ({H_\omega (t)}^{\frac{1}{2}} {H_\omega (t_0)}^{-\frac{1}{2}} )^{*} {H_\omega (t)}^{\frac{1}{2}} U_\omega(t,r) {H_\omega (t_0)}^{-\frac{1}{2}} \odot_L \right . \nonumber \\
 &   \left . \qquad \quad \qquad \qquad  \quad   \odot_L {H_\omega (t_0)}^{\frac{1}{2}} B_\omega \odot_R U_\omega(r,t) \diamond ( {\H_L(t_0)}^{\frac{1}{2}} A_\omega)^{\ddagger}  \right) \\
 & =   \T  \left (   \,   H_\omega(t)^{\frac{1}{2}} \U(t,r) (B_\omega)   \diamond  (  \H_L(t_0)^{\frac{1}{2}} A_\omega  )^{\ddagger}   \odot_R ( H_\omega(t)^{\frac{1}{2}}  H_\omega(t_0)^{- \frac{1}{2}} )^{*}  \right )   \label{eq59d} \\
 & = \T \left( {\H_L(t)}^{\frac{1}{2}} \U (t,r) ( B_\omega) \diamond  ( {\H_L(t)}^{\frac{1}{2}} A_\omega  )^{\ddagger} \right )  \label{eq59e}\\
& =   \la   \la { \H_L(t)}^{\frac{1}{2}}  A_\omega, {\H_L(t)}^{\frac{1}{2}} \U(t,r)(B_\omega)   \ra   \ra \ , \label{eq59f}
\end{align} 
where the equality in \eqref{eq59d} is due to \eqref{central2}, and to go from \eqref{eq59d} to \eqref{eq59e}, we used the fact that
\begin{align}
	H_\omega(t_0)^{- \frac{1}{2}} \odot_L C_\omega    = 	\H_L (t_0)^{- \frac{1}{2}} C_\omega  \in Q^{(0)} \quad \text{with}\quad  C_\omega :=   \H_L(t_0)^{\frac{1}{2}} A_\omega \,  \in \K_2. \label{H=H_L}
\end{align}
Indeed, this shows \eqref{eq59e} since
\begin{align}
	 &   C_\omega^{\ddagger}   \odot_R  (H_\omega(t)^{\frac{1}{2}}  H_\omega(t_0)^{- \frac{1}{2}}  )^{*}  =  ( H_\omega(t)^{\frac{1}{2}}  H_\omega(t_0)^{- \frac{1}{2}} \odot_L C_\omega )^{\ddagger} = {C_\omega}^{\ddagger}   \label{eq59f_2}
\end{align}
 by \eqref{leftrightM},  \eqref{HLtAHtA}, and  \eqref{H=H_L}.

Repeating the above arguments, one can also show that 
\begin{align}\label{eq59h}
  & 	 \lim_{h \to 0}   \frac{i}{h}  \la\la A_\omega ,  U_\omega \odot_L  B_\omega \odot_R \left ( U_\omega (r, t +h ) - U_\omega(r,t) \right )   \, \ra\ra \\
   & \qquad \qquad \qquad  = -   \la   \la { \H_L (t)}^{\frac{1}{2}}  \U_\omega (  B_\omega^{\ddagger}) ,   \,  { \H_L (t)}^{\frac{1}{2}}   A_\omega^{\ddagger}    \ra  \ra.  \notag
\end{align}
With \eqref{eq59f} and \eqref{eq59h}, we get \eqref{eq57}.

The equality \eqref{eq58} now follows from \eqref{eq57}. Indeed, for all $A_\omega, B_\omega \in \K_2$ 
\begin{align}\notag
	 & \la \la  A_\omega, \U(t,r) (B_\omega) \ra\ra =  \la \la  A_\omega, \,U_\omega (t,r) \odot _ L B_\omega \odot_R  U_\omega (r,t)\ra\ra \\\notag
		&  \quad		  =  \left \la \left \la   (B_\omega \odot_R  U_\omega (r,t))^{\ddagger} , \, (U_\omega (r,t) \odot _ L A_\omega)^{\ddagger} \right \ra \right \ra  \\
		&  \quad = \left \la \left\la  B_\omega ^{\ddagger}, \, \{\U_\omega (r,t) ( A_\omega) \}^{\ddagger}  \right\ra\right\ra =  \left \la\left \la  \U_\omega (r,t) ( A_\omega) , \, B_\omega \right\ra\right\ra. \label{eq60}
\end{align}
Hence,   \eqref{eq57}  gives us
\begin{align}\notag
& i \partial {r} \la \la  B_\omega, \U(t,r) (A_\omega) \ra\ra   =	i \partial {r}\la \la  \U (r,t)  (B_\omega) , \, A_\omega \ra\ra   = \overline{	-i \partial {r} \la \la  A_\omega, \U(r,t) (B_\omega) \ra\ra}     \\ 
&  \qquad   \qquad =  \overline {-  \LL_r ( A_\omega , \U(r,t) (B_\omega)  )}  =  -  \LL_r ( \U(r,t) (B_\omega) , A_\omega  ).
\end{align}
\end{proof}

\section {The generalized Liouville equation}
 Let $H_{\omega}$ be the ergodic magnetic Schr\"odinger operator in \eq{H(A,V)omega}-\eq{covintro}.
 With the adiabatic switching of a spatially homogeneous electric field $\El$, the system is described by the  time-dependent Hamiltonian $H_\omega(t)$ as in \eqref{eq:Homegaintro}. By Theorem~\ref{Q_invariant}, the quadratic form domain $Q(H_\omega(t))$ is independent of $t$.

We fix an initial equilibrium state at $t = - \infty$, for which we use the density matrix $\zeta_\omega$. For physical applications, we take $\zeta_\omega = f(H_\omega)$ with $f$
the Fermi-Dirac distribution at inverse temperature $\beta \in (0,\infty]$ and
Fermi energy $E_F \in \R$, that is,  
\begin{equation}\label{defzeta}
{\zeta}_\omega \ = \ \begin{cases}  F^{(\beta,E_F)}_\omega \ := \
\frac{1}{1+\e^{\beta (H_\omega - E_F)}} \, , &
\beta < \infty \, , \\
P^{(E_F)}_\omega \ := \ \chi_{(-\infty,E_F]}(H_\omega) \, , &\beta = \infty \, .
\end{cases}
\end{equation}
The key hypotheses are that $\zeta_\omega$ is real-valued, $\zeta_\omega \ge 0$, and for $k = 1, 2,  \cdots, d$, 
\begin{align}
	[x_k, \zeta_\omega] \in \K_2.
\end{align}
which is implied by  assumption \eqref{key_hypothesis}.   Note that  it follows from \eq{defzeta}  that $H_\omega^\frac{1}{2} \zeta_{\omega}\in \K_{\infty} \cap \K_{2}$ (to see  $H_\omega^\frac{1}{2} \zeta_{\omega}\in  \K_{2}$ use also \cite[Proposition~2.1]{BGKS}).
In particular,  we have  $\zeta_\omega \in Q^{(0)}$.

Recalling  \eqref{Script_G(t)}, we set
\begin{align}\label{zeta(t)}
	 \zeta_\omega(t) = \G(t)(\zeta_\omega) = G(t)\, \zeta_\omega \, G(t)^*,
\end{align}
and note that $\zeta_\omega(t) \in Q^{(0)}$ in view of Propositions~\ref{DL0_Core} and ~\ref{HR=HLdagger}.  

The density matrix $\varrho_\omega(t)$  evolves formally by the Liouville equation \eqref{Liouvilleeq}.
Remembering $\LL_t$   on $Q(|\LL|)$ from \eqref{QL_t} and \eqref{L_t}, we now state the following theorem, which generalizes Theorem~\ref{thmrho}.

\begin{theorem} \label{thm_rho}
There exists a unique $Q(|\LL|)$-valued function $\varrho_\omega(t)$, which solves the Liouville equation \eqref{Liouvilleeq} weakly in $Q^{(0)}$ in the following sense:
\begin{equation} \label{Liouville_K2}
\left\{ \begin{array}{l} i\partial_t \la\la A_\omega, \varrho_\omega(t)\ra\ra = \mathbb{L}_t (A_\omega, \varrho_\omega(t)) \\ 
\lim_{t \to -\infty} \la\la A_\omega, \varrho_\omega(t) \ra \ra = \la\la A_\omega , {\zeta}_\omega \ra \ra
\end{array}
\right. 
\end{equation}
for all $A_\omega \in Q^{(0)}$. 

Moreover, the unique solution  $\varrho_{\omega} (t) \in  Q^{(0)} $ for all $t$ and is given by 
\begin{align}
\varrho_\omega(t) &=\lim_{s \to -\infty}{ \U}(t,s)\left( {\zeta}_\omega
\right) = \lim_{s \to -\infty}{ \U}(t,s)\left( {\zeta}_\omega(s) \right) \label{rho_as_limitUzeta}
\\
&= {\zeta}_\omega(t)
- i
 \int_{-\infty}^t \mathrm{d} r \,\mathrm{e}^{\eta {r_{\! -}}}{ \,\U}(t,r) \left([ \mathbf{E}
\cdot \x, {\zeta}_\omega(r) ]\right).  \label{def_rho3intro}
\end{align}
\end{theorem}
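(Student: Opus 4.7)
The plan is to prove existence via the explicit Duhamel-type formula \eqref{def_rho3intro}, verify it solves the weak Liouville equation \eqref{Liouville_K2}, and establish uniqueness by a propagator cancellation argument.  First I would check $\zeta_\omega \in Q^{(0)}$ and $\zeta_\omega(t) = \G(t)(\zeta_\omega) \in Q^{(0)}$ for all $t$: since $\zeta_\omega = f(H_\omega)$ with $f$ bounded and rapidly decaying, $H_\omega^{1/2}\zeta_\omega \in \K_\infty \cap \K_2$, and the gauge identity $H_\omega(t) = G(t)H_\omega G(t)^*$ gives $\zeta_\omega(t) = f(H_\omega(t))$ with $H_\omega(t)^{1/2}\zeta_\omega(t) \in \K_2$, so $\zeta_\omega(t) \in Q^{(0)}$ by Propositions~\ref{DL0_Core} and \ref{HR=HLdagger}.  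The heart of the derivation would be to differentiate $\U(t,s)(\zeta_\omega(s))$ in~$s$: since $\zeta_\omega(s)$ commutes with $H_\omega(s)$, the contribution from differentiating the propagator (via \eqref{eq58}) vanishes, and only $\partial_s \zeta_\omega(s) = i[\El(s)\cdot x, \zeta_\omega(s)]$ from \eqref{partial_G(t)} survives.  This would yield, weakly against any $A_\omega \in Q^{(0)}$,
\begin{align*}
i\partial_s \la\la A_\omega, \U(t,s)(\zeta_\omega(s))\ra\ra = -\la\la A_\omega, \U(t,s)\bigl([\El(s)\cdot x, \zeta_\omega(s)]\bigr)\ra\ra,
\end{align*}
and integrating from $s$ to $t$ with $\El(s) = e^{\eta s_-}\El$ would give the pre-limit form of \eqref{def_rho3intro}.

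\textbf{Convergence and the Liouville equation.}  The integral in \eqref{def_rho3intro} converges absolutely in $\K_2$: since $\U(t,r)$ is isometric, $\G(r)$ unitary, and $[\El\cdot x, \zeta_\omega(r)] = \G(r)([\El\cdot x, \zeta_\omega])$ has $r$-independent $\K_2$-norm (finite by \eqref{key_hypothesis}), the factor $e^{\eta r}$ on $r \le 0$ forces integrability.  The first equality in \eqref{rho_as_limitUzeta} follows from $\zeta_\omega(s) \to \zeta_\omega$ in $\K_2$ as $s\to -\infty$, read off from $\G(s)(\zeta_\omega) - \zeta_\omega = i\int_{-\infty}^s e^{\eta r}\G(r)([\El\cdot x, \zeta_\omega])\,dr$ and the fact that $\U$ is isometric.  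I would then differentiate \eqref{def_rho3intro} in $t$: the derivative of $\zeta_\omega(t)$ contributes $-\la\la A_\omega, [\El(t)\cdot x, \zeta_\omega(t)]\ra\ra$, the boundary term at the upper limit of the integral contributes its opposite, and these cancel; the remaining integrand, differentiated via \eqref{eq57} of Proposition~\ref{formula59}, would give
\begin{align*}
\int_{-\infty}^t e^{\eta r_-}\LL_t\bigl(A_\omega, \U(t,r)([\El\cdot x, \zeta_\omega(r)])\bigr)\,dr = \LL_t\bigl(A_\omega, \varrho_\omega(t) - \zeta_\omega(t)\bigr)
\end{align*}
by linearity of $\LL_t$ in the second argument.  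Since $\zeta_\omega(t)$ commutes with $H_\omega(t)$, $\LL_t(A_\omega, \zeta_\omega(t)) = 0$, and \eqref{Liouville_K2} follows.  The initial condition is immediate from the estimate $\tnorm{\varrho_\omega(t) - \zeta_\omega}_2 \to 0$ as $t \to -\infty$, itself a consequence of \eqref{rho_as_limitUzeta}.

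\textbf{Uniqueness and main obstacle.}  For uniqueness, I would let $\sigma_\omega = \widetilde\varrho_\omega - \varrho_\omega$ be the difference of two solutions, so $\sigma_\omega$ satisfies the homogeneous weak equation with $\la\la A_\omega, \sigma_\omega(s)\ra\ra \to 0$ as $s\to -\infty$.  Fixing $t_0 \in \R$ and $B_\omega \in Q^{(0)}$, I would consider $\phi(s) = \la\la \U(s,t_0)(B_\omega), \sigma_\omega(s)\ra\ra$ for $s \le t_0$ (using \eqref{eq60} to move $\U$ across the pairing).  Applying \eqref{eq58} to the $s$-derivative of $\U(s,t_0)(B_\omega)$ and the weak Liouville equation to $\sigma_\omega(s)$, the two contributions $\pm\LL_s(\U(s,t_0)(B_\omega), \sigma_\omega(s))$ would cancel, so $\phi$ is constant; letting $s\to -\infty$ forces $\phi\equiv 0$, and evaluation at $s=t_0$ gives $\la\la B_\omega, \sigma_\omega(t_0)\ra\ra = 0$, whence $\sigma_\omega(t_0)=0$ by density of $Q^{(0)}$ in $\K_2$.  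The hardest part will be ensuring that $[\El\cdot x, \zeta_\omega(r)]$ and $\U(s,t_0)(B_\omega)$ remain in $Q^{(0)}$ so that Proposition~\ref{formula59} applies: this demands $H_\omega^{1/2}[\El\cdot x, \zeta_\omega] \in \K_2$ (a smoothing property of $f$) together with invariance of $Q^{(0)}$ under $\U$ (the $\K_2$-analogue of \eqref{U_Qinv}), both of which will require separate verification.
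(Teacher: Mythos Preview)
Your proposal is correct and follows essentially the same route as the paper: differentiate $\U(t,s)(\zeta_\omega(s))$ in $s$ (using Proposition~\ref{formula59} and the vanishing $\LL_s(\cdot,\zeta_\omega(s))=0$), integrate to obtain the Duhamel formula, differentiate in $t$ to verify \eqref{Liouville_K2}, and prove uniqueness via the propagator cancellation $\partial_t\la\la \U(t,s)(A_\omega),\sigma_\omega(t)\ra\ra=0$.  The two obstacles you single out are precisely the paper's Lemma~\ref{crucial_lemma} (that $[\El\cdot x,\zeta_\omega]\in Q^{(0)}$, established there via a truncation $x_j^{(N)}$ and the hypothesis \eqref{key_hypothesis}) and the $Q^{(0)}$-invariance of $\U(t,r)$, which is implicit in the proof of Proposition~\ref{formula59} (via the bounded operator $W_\omega(t,r)=H_\omega(t)^{1/2}U_\omega(t,r)H_\omega(r)^{-1/2}$).
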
 

\begin{lemma} \label{lem_bath}
 Let $A_\omega \in Q(|\LL|)$. Then,
 \begin{align}
    \LL_t (A_\omega, \zeta_\omega(t)) = 0. \label{eq64}
\end{align}
\end{lemma}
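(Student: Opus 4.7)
The result is the rigorous $\K_2$-version of the functional-calculus identity $[H_\omega(t),\zeta_\omega(t)]=0$, which holds because $\zeta_\omega(t)=f(H_\omega(t))$ with $f$ real-valued. My plan is to lift this to an identity between the self-adjoint operators $\H_L(t)$ and $\H_R(t)$ on $\K_2$: namely $\H_L(t)\zeta_\omega(t)=\H_R(t)\zeta_\omega(t)=H_\omega(t)\zeta_\omega(t)$ as elements of $\K_2$. By the standard pairing of a quadratic form with its self-adjoint operator, this forces both $\HH_{L,t}(A_\omega,\zeta_\omega(t))$ and $\HH_{R,t}(A_\omega,\zeta_\omega(t))$ to equal $\la\la A_\omega,H_\omega(t)\zeta_\omega(t)\ra\ra$, whence $\LL_t(A_\omega,\zeta_\omega(t))=0$.

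First I would verify that $\zeta_\omega(t)\in \D(\H_L(t))\cap\D(\H_R(t))$, i.e.\ $H_\omega(t)\zeta_\omega(t)\in\K_2$. This upgrades the stated hypothesis $H_\omega^{1/2}\zeta_\omega\in\K_2$: for either choice in \eqref{defzeta} the function $\lambda\mapsto \lambda^2 f(\lambda)^2$ is bounded and decays rapidly on $[1,\infty)$, so $\tnorm{H_\omega\zeta_\omega}_2^2=\E\{\tr(\chi_0 f(H_\omega)^2 H_\omega^2\chi_0)\}<\infty$ by the diamagnetic/heat-kernel bound $\chi_0 e^{-\alpha H_\omega}\chi_0\in\L^1$ (\cite[Proposition 2.1]{BGKS}). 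Gauge covariance $H_\omega(t)\zeta_\omega(t)=G(t)\,H_\omega\zeta_\omega\,G(t)^*$ then transfers this to arbitrary $t$, and $\zeta_\omega(t)^\ddagger=\zeta_\omega(t)$ combined with Proposition~\ref{HR=HLdagger} gives $\zeta_\omega(t)\in\D(\H_R(t))$.

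For any $A_\omega\in Q(|\LL|)\subset\D(\H_L(t)^{1/2})\cap\D(\H_R(t)^{1/2})$, general form theory for nonnegative self-adjoint operators yields
\begin{align*}
\HH_{L,t}(A_\omega,\zeta_\omega(t))&=\la\la A_\omega,\H_L(t)\zeta_\omega(t)\ra\ra=\la\la A_\omega,H_\omega(t)\zeta_\omega(t)\ra\ra,\\
\HH_{R,t}(A_\omega,\zeta_\omega(t))&=\la\la A_\omega,\H_R(t)\zeta_\omega(t)\ra\ra,
\end{align*}
where the second equality in the first line uses that $\H_L(t)$ acts as left multiplication by $H_\omega(t)$ on $\D_L^{(0)}\ni\zeta_\omega(t)$ (Proposition~\ref{DL0_Core}). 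By Proposition~\ref{HR=HLdagger}, $\H_R(t)=\cJ\,\H_L(t)\,\cJ$; using $\cJ\zeta_\omega(t)=\zeta_\omega(t)$ and the self-adjointness of the commuting product $H_\omega(t)\zeta_\omega(t)=\zeta_\omega(t)H_\omega(t)$,
$$
\H_R(t)\zeta_\omega(t)=\cJ\,\H_L(t)\,\zeta_\omega(t)=\cJ\bigl(H_\omega(t)\zeta_\omega(t)\bigr)=H_\omega(t)\zeta_\omega(t).
$$
Thus $\HH_{L,t}(A_\omega,\zeta_\omega(t))=\HH_{R,t}(A_\omega,\zeta_\omega(t))$ and \eqref{eq64} follows. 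The only nontrivial technical point is the $\K_2$-membership $H_\omega\zeta_\omega\in\K_2$ in the second paragraph, which is handled by the diamagnetic estimate; everything else is a direct application of form theory and Proposition~\ref{HR=HLdagger}.
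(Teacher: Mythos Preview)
Your argument is correct and a bit more direct than the paper's. The paper does not establish $\zeta_\omega(t)\in\D(\H_L(t))$ outright; instead it approximates $\zeta_\omega(t)$ by $\widetilde\zeta_\omega(t):=f_n(\H_L(t))\zeta_\omega(t)\in\D(\H_L(t))$ (with $f_n$ a bounded compactly supported approximate identity), proves $\HH_{L,t}(A_\omega,\widetilde\zeta_\omega(t))=\HH_{R,t}(A_\omega,\widetilde\zeta_\omega(t))$ by essentially the same self-adjointness/$\cJ$-conjugation step you carry out, and then passes to the limit $n\to\infty$ via the trace-per-unit-volume identities \eqref{centralK2}--\eqref{lemmaduality2}. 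Your route trades that limiting argument for the stronger input $H_\omega\zeta_\omega\in\K_2$, which is indeed available for the specific choices in \eqref{defzeta} by the trace estimates of \cite[Proposition~2.1]{BGKS}; the paper's approximation scheme, by contrast, only needs $\zeta_\omega(t)\in Q^{(0)}$. One small point worth making explicit: Proposition~\ref{DL0_Core} is stated for $\H_L(t)^{1/2}$, not $\H_L(t)$, so to justify $\H_L(t)\zeta_\omega(t)=H_\omega(t)\zeta_\omega(t)$ you should observe that $H_\omega(t)^{1/2}\zeta_\omega(t)\in\D_{L,t}^{(0)}$ (since $H_\omega(t)\zeta_\omega(t)\in\K_2$) and then apply \eqref{HLtAHtA} twice.
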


\begin{proof} 
Without loss of generality, let $A_\omega = A_\omega^{\ddagger}$ and  consider 
\begin{align}
  \widetilde \zeta_\omega (t) := f_n (\H_L (t)) \zeta_\omega (t) \, \in \D(\H_L (t)), \label{eq65}
\end{align}
where ${f_n}$ be a sequence of measurable, bounded, and compactly supported functions such that $f_n \to \delta$, the Dirac-delta function.

Since ${\H_L(t)}  \widetilde \zeta_\omega (t)$ is self-adjoint, a simple calculation yields
\begin{align}
 &  \HH_{L,t} (A_\omega , \widetilde  \zeta_\omega (t) )    =    \HH_{R,t} (A_\omega , \widetilde  \zeta_\omega (t) ), 
\end{align}
which shows \eqref{eq64} with $\widetilde \zeta_\omega (t)$.   However,  by \eqref{central}, \eqref{centralK2}, and \eqref{lemmaduality2} 
\begin{align}
	&   \la  \la {\H_L(t)}^{\frac{1}{2}} A_\omega , {\H_L(t)}^{\frac{1}{2}} \widetilde \zeta_\omega (t)    \ra \ra = \T \left (   \{  {\H_L(t)}^{\frac{1}{2}} A_\omega   \}^{\ddagger} \diamond  {\H_L(t)}^{\frac{1}{2}} f_n (\H_L (t) ) \zeta_\omega (t) \right )  \nonumber \\
	  & \quad = \T \left (    f_n (\H_L (t))  \left [ {\H_L(t)}^{\frac{1}{2}}  \zeta_\omega (t)    \diamond   \{   {\H_L(t)}^{\frac{1}{2}} A_\omega   \}^{\ddagger} \right ] \right ) \label{eq67}  \\
	  &  \quad \rightarrow \T \left (   {\H_L(t)}^{\frac{1}{2}}  \zeta_\omega (t)    \diamond   \{   {\H_L(t)}^{\frac{1}{2}} A_\omega  \}^{\ddagger}  \right )  =  \la  \la {\H_L(t)}^{\frac{1}{2}} A_\omega , {\H_L(t)}^{\frac{1}{2}}  \zeta_\omega (t)    \ra  \ra\nonumber
\end{align}
as $n \to \infty$. Similarly, one can also show
\begin{align}
 \la   \la {\H_L(t)}^{\frac{1}{2}} \widetilde \zeta_\omega (t), {\H_L(t)}^{\frac{1}{2}}  A_\omega    \ra   \ra \rightarrow   \la   \la {\H_L(t)}^{\frac{1}{2}}  \zeta_\omega (t), {\H_L(t)}^{\frac{1}{2}}  A_\omega   \ra  \ra  
\end{align}
as $n \to \infty$. Together with \eqref{eq67} we have \eqref{eq64} for all $ \zeta_\omega (t)$.    
 \end{proof}
The following lemma plays a crucial role in proving Theorem~\ref{thm_rho}.
\begin{lemma}  \label{crucial_lemma}
We have
$
	 [ \mathbf{E} \cdot \x, {\zeta}_\omega ] \in Q^{(0)}
$.
\end{lemma}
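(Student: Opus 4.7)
The plan is to first reduce the claim to a single Hilbert--Schmidt estimate and then establish that estimate via a functional-calculus/commutator argument. For the reduction, note that since $f$ is real-valued, $\zeta_\omega$ is self-adjoint and so $A_\omega := [\El \cdot \x, \zeta_\omega]$ satisfies $A_\omega^\ddagger = -A_\omega$. By the definition \eqref{Q(0)} of $Q^{(0)}$, this collapses the two requirements $H_\omega^{1/2} A_\omega \in \K_2$ and $H_\omega^{1/2} A_\omega^\ddagger \in \K_2$ into one, so it suffices to show $A_\omega \in \K_2$ together with $H_\omega^{1/2} A_\omega \in \K_2$; by linearity this further reduces to the analogous statements for each commutator $[x_k,\zeta_\omega]$, $k=1,\ldots,d$. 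The $\K_2$-membership of $[x_k,\zeta_\omega]$ is essentially the content of \eqref{key_hypothesis}: expanding $[x_k,\zeta_\omega]\chi_0 = x_k\zeta_\omega\chi_0 - \zeta_\omega (x_k\chi_0)$, the second term is bounded by $\|\zeta_\omega\|\,\|x_k\chi_0\|_2$ (uniformly in $\omega$ since $\zeta_\omega \in \K_\infty$), while Cauchy--Schwarz gives $\|x_k\zeta_\omega\chi_0\|_2^2 \le \|\zeta_\omega\chi_0\|_2\,\|x_k^2\zeta_\omega\chi_0\|_2$; both factors have finite $\E$-moment, the first by the trace estimates \cite[Proposition~2.1]{BGKS} and the second by hypothesis \eqref{key_hypothesis}.

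For the main claim $H_\omega^{1/2}[x_k,\zeta_\omega] \in \K_2$, the crucial inputs are the commutator identity $[H_\omega,x_k] = -2i\Db_k$, where $\Db_k$ is the $k$-th component of $\Db(\A_\omega)$, the diamagnetic bound $\|\Db_k H_\omega^{-1/2}\| \le C$ from \cite[Proposition~2.3]{BGKS}, and the rapid decay of $f$, which allows one to write $\zeta_\omega = (H_\omega+1)^{-N} h_N(H_\omega)$ with $h_N(\lambda) := (1+\lambda)^N f(\lambda)$ still bounded and still rapidly decaying for every $N \in \N$. This factorization yields
\[
 H_\omega^{1/2}[x_k,\zeta_\omega] = H_\omega^{1/2}[x_k,(H_\omega+1)^{-N}]\, h_N(H_\omega) + H_\omega^{1/2}(H_\omega+1)^{-N} [x_k, h_N(H_\omega)],
\]
and the telescoping identity $[x_k,(H_\omega+1)^{-N}] = 2i\sum_{j=1}^N (H_\omega+1)^{-j}\Db_k(H_\omega+1)^{-(N-j+1)}$, combined with the spectral bounds $\|H_\omega^{1/2}(H_\omega+1)^{-j}\| \le C_j$ and $\|\Db_k(H_\omega+1)^{-m}\|\le C_m$ for $m \ge 1/2$ (the latter from the diamagnetic bound), shows that $H_\omega^{1/2}[x_k,(H_\omega+1)^{-N}]$ is a bounded operator. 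The second summand is then handled either iteratively or, more efficiently, via a Helffer--Sj\"ostrand representation of $h_N(H_\omega)$ combined with the resolvent commutator $[x_k,(z-H_\omega)^{-1}] = 2i(z-H_\omega)^{-1}\Db_k(z-H_\omega)^{-1}$.

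The principal technical obstacle is making the Hilbert--Schmidt estimate close in the presence of $\chi_0$, which is itself not Hilbert--Schmidt: each contribution must be rearranged so that it ultimately acts on a factor $(H_\omega + c)^{-p}\chi_0$ with $p$ large enough that $\E\{\|(H_\omega + c)^{-p}\chi_0\|_2^2\} < \infty$, which is again furnished by \cite[Proposition~2.1]{BGKS}. Choosing $N$ sufficiently large in the rewriting $\zeta_\omega = (H_\omega+1)^{-N} h_N(H_\omega)$ provides exactly such resolvent powers, while the rapid decay of $f$ guarantees that $h_N(H_\omega)\chi_0 \in \K_2$ has finite $\E$-moment; covariance then converts the pointwise-in-$\omega$ operator-norm bounds into the required global $\K_2$ estimate.
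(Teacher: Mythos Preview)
Your reduction (anti-self-adjointness of $[\El\cdot\x,\zeta_\omega]$, componentwise treatment, and the $\K_2$ membership via \eqref{key_hypothesis}) is fine and matches the paper. The gap is in your main estimate for $H_\omega^{1/2}[x_k,\zeta_\omega]\in\K_2$. Your factorization $\zeta_\omega=(H_\omega+1)^{-N}h_N(H_\omega)$ followed by Helffer--Sj\"ostrand on $h_N$ requires $h_N$ to be sufficiently smooth. But the case of primary interest---and the one the paper explicitly singles out as the harder case---is the zero-temperature Fermi projection $\zeta_\omega=P_\omega^{(E_F)}=\chi_{(-\infty,E_F]}(H_\omega)$, where $f$ is a characteristic function and $h_N(\lambda)=(1+\lambda)^N\chi_{(-\infty,E_F]}(\lambda)$ has a jump at $E_F$. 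The Helffer--Sj\"ostrand representation is unavailable there, and your ``iterative'' alternative does not terminate: applying the same decomposition to $h_N$ just produces $h_{N+M}$ with the identical discontinuity. Worse, if you try to expand $[x_k,h_N(H_\omega)]$ directly you are led back to $[x_k,P_\omega^{(E_F)}]$ with no gain, so the argument is circular for non-smooth $f$.

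The paper avoids functional calculus altogether. It approximates $x_j$ by compactly supported cutoffs $x_j^{(N)}$, uses the form structure $H_+=\Db^*\Db+V_+$ and the product rule $\Db x_j^{(N)}=x_j^{(N)}\Db-i(\partial_{x_j}x_j^{(N)})$ to bound $\|H_+^{1/2}x_j^{(N)}P_\omega^{(E_F)}\phi\|^2$ directly in terms of $\|H_\omega P_\omega^{(E_F)}\phi\|$, $\|x_j^2P_\omega^{(E_F)}\phi\|$, and $\|\Db_jP_\omega^{(E_F)}\phi\|$, then passes to the limit. Hypothesis \eqref{key_hypothesis} supplies exactly the $x_j^2$ control needed, and no regularity of $f$ beyond boundedness of $H_\omega P_\omega^{(E_F)}$ and $H_\omega^{1/2}P_\omega^{(E_F)}\in\K_2$ is used. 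Your resolvent-commutator machinery would work for smooth $f$ (finite temperature), but as written it does not cover the spectral-projection case.
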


\begin{proof} 

We need to prove  $[ \mathbf{E} \cdot \x, {\zeta}_\omega ]\, \H_c \subset \D ( {H_\omega}^{\frac{1}{2}})$  and   $H_\omega^{\frac{1}{2}}\ [ \mathbf{E} \cdot \x, {\zeta}_\omega ] \in \K_2$.  We will do this  for ${\zeta}_\omega  = P_\omega^{(E_F)},$ since the other case is slightly easier. 
It also suffices to show that $ \left[  x_j, P_\omega^{(E_F)} \right] $ is in $Q^{(0)}$ for each $j=1,2,\ldots,d$.

To show $\left[x_j,P_\omega^{(E_F)}\right]\H_c \subset \D  ({H_\omega}^{\frac{1}{2}})$, we only need to verify   $x_j P_\omega^{(E_F)} \varphi \in \D   ({H_\omega}^{\frac{1}{2}})$ for all $\varphi \in \H_c$ (since $ P_\omega^{(E_F)}x_j \varphi \in \D  ({H_\omega}^{\frac{1}{2}}  )$  for all $\varphi \in \H_c$).
Let $\eta_N$ be a sequence of $C_0^{\infty}(\R)$ functions  such that   $|\eta_N| \le 1$ for all $N$ and $\eta_N = 1$ for  $x_j \in [-N, N]$ and decays to $0$ otherwise. Then, 
\begin{align}
 x_j^{(N)} := x_j \eta_N(x_j)\, \in C_0^{\infty}(\R)
\end{align} 
has the property that for all $N$, $\partial_{x_j} \left|x_j^{(N)}\right| < C$   for some fixed constant $C$.  By assumption \eqref{key_hypothesis} (which  implies  $x_j P_\omega^{(E_F)} \chi_0  \varphi \in \H$), we have that as $N \to \infty$
\begin{align}
	 x_j^{(N)} P_\omega^{(E_F)}    \phi  \to x_j  P_\omega^{(E_F)}   \phi \quad  \text{in} \quad  \H
	 \end{align}
for all $\phi = \chi_0 \varphi \in \H_c$.
 
Recall  $H_+   = H(\A, V_+) =  \Db^*\Db + V_+ $ and note that $\Db x_j^{(N)} P_\omega^{(E_F)} \phi$ is well-defined  since $\left \| \Db  P_\omega^{(E_F)} \right \| = \left \| \Db   { {H_\omega}}^{- \frac{1}{2}}   {H_\omega}^{\frac{1}{2}} P_\omega^{(E_F)}    \right \| < \infty $ and
\begin{align} \label {Dxj}
\Db x_j^{(N)} P_\omega \phi =  { x_j^{(N)}} \Db   P_\omega^{(E_F)} \phi - i \delta_{kj} \textstyle{\frac{\partial} {\partial{x_j}}} \left( x_j^{(N)}\right)    P_\omega^{(E_F)} \phi \ , 
\end{align} 
where $\delta_{kj}$ is the vector consisting of $1$ in the $j$-th entry (i.e., when $k=j$) and $0$ elsewhere.
Since $x_j^{(N)}$ is compactly supported, $  {V_+}^{\frac{1}{2}} x_j^{(N)}  P_\omega^{(E_F)} \phi $ is well-defined as well, and this shows $x_j^{(N)}  P_\omega \phi \in \D  (  {H_+}^{\frac{1}{2}})$.

In \eqref{HxjN}-\eqref{inequality_lemma},
we use $P_\omega = P_\omega^{(E_F)}$,  $   \left(x_j^{(N)}\right) ^{'} =  \textstyle{\frac{\partial} {\partial{x_j}}}  x_j^{(N)}  $, and $ x_j^{(N,M)} =   \left( x_j^{(N)} - x_j^{(M)}\right)$. Given $\phi \in \H_c$, note that
\begin{align} \label{HxjN}
 &\left \|  { H_+}^{\frac{1}{2}}  x_j^{(N,M)}   P_\omega \phi\right \|^2    \\
 & \quad =    \la \Db   x_j^{(N,M)}  P_\omega \phi, \Db   x_j^{(N,M)}  P_\omega \phi   \ra  +   \la   x_j^{(N,M)}   P_\omega \phi, V_+   x_j^{(N,M)}  P_\omega \phi  \ra   \\
 & \quad  =   \la  {H_+}^{\frac{1}{2}}  P_\omega \phi, {H_+}^{\frac{1}{2}}    \left(  x_j^{(N,M)} \right)^2   P_\omega \phi  \ra  - i     \la       \left (  x_j^{(N,M)}   \right)^{'} P_\omega \phi, \Db_j     x_j^{(N,M)}    P_\omega \phi  \ra \label{equality_2nd} \\
 & \quad \le  \, \left \|  {H_+}  P_\omega \phi  \right \| \left \|   \left (  x_j^{(N,M)} \right )^2  P_\omega \phi   \right \|  + \widetilde C \left \|  \Db_j  P_\omega \phi     \right \| \left \|     x_j^{(N,M)}     P_\omega \phi     \right \| \rightarrow    0  \label {inequality_lemma}
\end{align} 
as $N, M \to \infty$. Here, $\widetilde C = \sup_{N,M}\left | \left (  x_j^{(N,M)} \right )^{'}   \right |$ and equality in \eqref{equality_2nd} is justified by \eqref{Dxj}. Thus,  ${H_+}^{\frac{1}{2}}x_j^{(N)}  P_\omega^{(E_F)} \phi$ is  a  Cauchy sequence in $\H$, and its limit  ${H_+}^{\frac{1}{2}}x_j  P_\omega^{(E_F)} \phi$  is in $\H$ for a.e. $\omega$. It also follows  that ${H_\omega }^{\frac{1}{2}}x_j  P_\omega^{(E_F)} \phi \in \H$ for a.e.\ $\omega$ by Theorem~\ref{Q_invariant}.

We now turn to the claim ${H_\omega}^{\frac{1}{2}} \left[x_j ,  P_\omega^{(E_F)} \right] $ is in $\K_2$. For this, it suffices to verify 
\begin{align}
\E \left \{ \left \| {H_\omega}^{\frac{1}{2}}x_j   P_\omega^{(E_F)} \chi_0     \right \|_2^2  \right \} < \infty.
\end{align}
Let $\{\phi_n\}$ be an orthonormal basis for $\chi_0 \H$. 
By the arguments that led to \eqref{inequality_lemma},  we   have
\begin{align}
 &\E \left \{ \left \| {H_\omega}^{\frac{1}{2}}x_j   P_\omega^{(E_F)} \chi_0     \right \|_2^2  \right \}  = \E \sum_n \left \la  H_\omega^{\frac{1}{2}}x_j   P_\omega^{(E_F)} \phi_n ,  {H_\omega}^{\frac{1}{2}}x_j   P_\omega^{(E_F)}\phi_n \right \ra      \\
 & \quad \le \E  \sum_n \left \{  \left \|  {H_\omega }  P_\omega^{(E_F)} \phi_n  \right \| \left \|     x_j ^2  P_\omega^{(E_F)} \phi_n  \right \|  +   C \left \|  \Db_j  P_\omega^{(E_F)} \phi_n     \right \| \left \|  x_j    P_\omega^{(E_F)} \phi_n    \right \|   \right \} \nonumber \\
 & \quad  \le    C^{'} \E  \left \{  \left \|  {H_\omega }  P_\omega^{(E_F)} \chi_0  \right \|_2 ^2 +  \left \|     x_j ^2  P_\omega^{(E_F)}\chi_0 \right \|_2 ^2  +   \left \|  \Db_j  P_\omega^{(E_F)} \chi_0   \right \|_2  ^2 +  \left \|  x_j    P_\omega^{(E_F)} \chi_0   \right \| _2^2   \right \}  \nonumber
\end{align}
which is bounded by \eqref{key_hypothesis} and the fact that ${H_\omega}  P_\omega^{(E_F)} $ and $\Db_j  P_\omega^{(E_F)} $ are in $\K_2$  (see \cite[Lemma 5.4]{BGKS}). Here, $C^{'}$ is constant, and this completes the proof.
\end{proof}

 With Lemmas~\ref {lem_bath} and ~\ref{crucial_lemma}, we now prove Theorem~\ref{thm_rho}.

\begin{proof} [Proof of Theorem~\ref{thm_rho}.]
By \eqref{partial_G(t)} and Proposition \ref {formula59}, we first note that,  
 given $A_\omega \in  Q^{(0)}$, we have
\begin{align}\notag
	 &   i\partial_s \la\la A_\omega, \U(t,s)(\zeta_\omega (s))\ra\ra \\
 & \quad = -   \LL_s (\U(s,t)  (A_\omega), \zeta_\omega (s) )  -  \la\la A_\omega, \U(t,s)([\El(s)\cdot \x, \zeta_\omega(s)]) \ra\ra  \label{partial_s_rho} \\
  & \quad = - \la\la A_\omega, \U(t,s)([\El(s)\cdot \x, \zeta_\omega(s)]) \ra\ra, \notag
\end{align}
where  the  equality \eqref{partial_s_rho} comes from  Lemma~\ref{lem_bath}.  Hence,   
 \begin{align}
	&\la\la A_\omega,   \U(t,s)(\zeta_\omega (s))\ra\ra  =\la\la A_\omega,   \zeta_\omega(t)\ra\ra - i \int_s^t {\mathrm{d}r\, \la\la A_\omega, \U(t,r)([\El(r)\cdot \x, \zeta_\omega(r)])\ra\ra} \label{eq70}
 \end{align}
which shows  \eqref{def_rho3intro}. The second equality in \eqref{rho_as_limitUzeta}  follows from  the strong continuity of $\G(t)$ .  
Since  $ {\U}(t,r) \left(\left[{\El}\cdot \x,{\zeta}_\omega(s)\right]
\right) \in \K_2$
by \eqref{key_hypothesis} and Proposition~\ref{propUomega}, we have\begin{align}\notag
		&  \la\la A_\omega,   \varrho_\omega(t)\ra\ra = \lim_{s \to -\infty} \la\la A_\omega,  \U(t,s)(\zeta_\omega (s))\ra\ra  \\
		& \quad = \la\la A_\omega,   \zeta_\omega(t)\ra\ra - i \int_{-\infty}^t {\mathrm{d}r\, \la\la A_\omega, \U(t,r)([\El(r)\cdot \x, \zeta_\omega(r)])\ra\ra}. \label{eq71}
\end{align}

We now claim that
\begin{align}
	{H_\omega(t)}^{\frac{1}{2}} \U(t,r)([\El(r)\cdot \x, \zeta_\omega(r) ]) \in \K_2 \label{HUCinK2}. 
\end{align}
First,  note  that
\begin{align}\notag
	& {H_\omega(t)}^{\frac{1}{2}} \U(t,r)([\El(r)\cdot \x, \zeta_\omega(r) ]) \\	
& =  \mathrm{e}^{\eta r_-} {H_\omega(t)}^{\frac{1}{2}} U_\omega (t,r) H_\omega(r)^{- \frac{1}{2}}\odot_L  H_\omega(r)^{ \frac{1}{2}} [\El \cdot \x, \zeta_\omega (r) ] \odot_R U_\omega (r,t)\\
& =  \mathrm{e}^{\eta r_-}  W_\omega(t,r) \odot_L \G(r) \left ( {H_\omega}^{ \frac{1}{2}} [\El \cdot \x, \zeta_\omega ] \right ) \odot_R U_\omega (r,t).\notag
\end{align}
Since $\G(r) $ as in \eqref{Script_G(t)} is a unitary map on $\K_2$,  we have
$ \G(r) \left ( {H_\omega}^{ \frac{1}{2}} [\El \cdot \x, \zeta_\omega ] \right ) \in \K_2$
by Lemma~\ref{crucial_lemma}.  Since  $W_\omega(t,s)$ is uniformly bounded, \eqref {HUCinK2} follows from Proposition~\ref{propUomega}. 

 It also  follows from \eqref{HUCinK2} that
\begin{align} \label{eq75}
	  \int_s^t {\mathrm{d}r\, {H_\omega(t)}^{\frac{1}{2}} \U(t,r)([\El(r)\cdot \x, \zeta_\omega(r) ])} \in \K_2 \, ,
\end{align}
where we used the Bochner integral above. Moreover,    
\begin{align}
	&  \int_s^t {\mathrm{d}r \, {H_\omega(t)}^{\frac{1}{2}} \U(t,r)([\El(r)\cdot \x, \zeta_\omega(r) ])} 
= {H_\omega(t)}^{\frac{1}{2}} \int_s^t {\mathrm{d}r \, \U(t,r)([\El(r)\cdot \x, \zeta_\omega(r) ])}\, , \label{eq77}
\end{align}
by a similar argument following  \cite[Eq.~(5.29)]{BGKS}.

We now continue from \eqref{eq71}. Recalling Proposition~\ref{formula59},  Lemma~\ref{lem_bath}, \eqref{partial_G(t)}, and \eqref{zeta(t)}, letting $A_\omega \in Q^{(0)}$ we have 
\begin{align}
&	i \partial_t \la\la A_\omega,   \varrho_\omega(t)\ra\ra \\ \notag
&= - \la\la A_\omega,  [\El(t)\cdot \x, \zeta_\omega(t)]  \ra\ra  +  \la\la A_\omega, \,  \U(t,t) ([\El(t)\cdot \x, \zeta_\omega(t)]) \ra\ra  \\\notag
&\qquad \, -  i \int_{-\infty}^t {\mathrm{d}r  \left \{\, i \partial_t \,\la\la A_\omega,  \U(t,r)([\El(r)\cdot \x, \zeta_\omega(r)])\ra\ra \right \}}   \\
& = - i \int_{-\infty}^t {\mathrm{d}r   \la   \la{H_\omega(t)}^{\frac{1}{2}} A_\omega , \, {H_\omega(t)}^{\frac{1}{2}} \U(t,r)([\El(r)\cdot \x, \zeta_\omega(r) ])  \ra   \ra } \label{eq80}\\
& \qquad + i \int_{-\infty}^t {\mathrm{d}r   \la   \la {H_\omega(t)}^{\frac{1}{2}}\U(t,r)([\El(r)\cdot \x, \zeta_\omega(r) ]^{\ddagger}) , \, {H_\omega(t)}^{\frac{1}{2}}   A_\omega^{\ddagger}  \ra   \ra} \label{eq81} \\
& =  \widetilde \LL_t \left (  A_\omega,  - i \int_{-\infty}^t {\mathrm{d}r \, \U(t,r)([\El(r)\cdot \x, \zeta_\omega(r) ])}\right) \label {eq84}\\
& =  \widetilde \LL_t \left (  A_\omega, \, \zeta_\omega (t) - i \int_{-\infty}^t {\mathrm{d}r \, \U(t,r)([\El(r)\cdot \x, \zeta_\omega(r) ])}\right)  = \widetilde \LL_t \left (  A_\omega, \, \varrho_\omega (t) \right ) \label{eq87}.
\end{align}
where to go from \eqref{eq80}-\eqref{eq81} to \eqref{eq84}  we used  \eqref{eq75}, \eqref{eq77}, and the definition of $\widetilde \LL_t$  in \eqref{L_tilde}.

 Since $- i \int_{-\infty}^t {\mathrm{d}r \, \U(t,r)([\El(r)\cdot \x, \zeta_\omega(r) ])} \in Q^{(0)}$
by \eqref{eq75} and \eqref{eq77}, we can replace  $\widetilde \LL_t$ in \eqref{eq84}-\eqref{eq87} with $\LL_t$ by \eqref{HLAHA}. This shows that $\varrho_\omega (t)$, given in \eqref{def_rho3intro}, is  a solution to the Liouville equation \eqref{Liouville_K2} (and \eqref{GenLiouvilleeq}) for all $A_\omega \in Q^{(0)}$.

It remains to show the solution  $\varrho_\omega (t)$ is unique in $\K_2$. Let $v_\omega(t)$ be a solution of \eqref{Liouville_K2} with $\zeta_\omega = 0$, then it suffices to show that $v_\omega(t) = 0$ for all $t$. By \eqref{eq60}, which states that for $A_\omega, B_\omega \in \K_2$
\begin{align}
	\la \la  A_\omega, \U(t,r) (B_\omega) \ra\ra = \la \la \U(r,t) (A_\omega),  B_\omega \ra\ra \ ,
\end{align}
we have that for $ A_\omega \in Q^{(0)}$ and $v_\omega(t) \in Q(|\LL|)$,  
\begin{align}
	& i\partial_t 	\la \la  A_\omega, \U(s,t) (v_\omega(t) ) \ra\ra  = - \LL_r ( \U(t,s) (A_\omega) ,v_\omega (t) ) +   \LL_r ( \U(t,s) (A_\omega) ,v_\omega (t) ) = 0\, 
\end{align}
by Proposition~\ref{formula59}. Hence, letting $t = s$, we conclude that   for all $A_\omega \in Q^{(0)}$,
 \begin{align}
	 	\la \la  A_\omega, \U(s,t) (v_\omega(t) ) \ra\ra = 	\la \la  A_\omega, \U(s,s) (v_\omega(s) ) \ra\ra = 	\la \la  A_\omega, v_\omega(s)  \ra\ra \ .
\end{align}
 This shows  $\la \la  A_\omega, v_\omega(t)  \ra\ra = \la \la  A_\omega, \U(t,s)(v_\omega(s))  \ra\ra$, and letting $s \to -\infty$ we see that $\la \la  A_\omega, v_\omega(t)  \ra\ra = 0 $   for all $t$ and for all $A_\omega \in Q^{(0)}$. Since $Q^{(0)}$ is dense in $\K_2$,  $v_\omega(t) = 0$, and this completes the proof.
\end{proof}

%%%%%%%%%%%%%%%%%%%%%%%%%%%%%%%%%%%%%%%%%%%%%%%%%%%%%%%%%%%%%%%%%%%%%%%%%%%%%%%%

\end{document}